\newcommand*\dash{-\penalty0\hskip0pt\relax}%
\tikzset{
  agent/.style={draw, fill=blue!25!white,circle, minimum size=2mm,
  font=\footnotesize},
  x = 1.5cm, y=1.5cm,
  every loop/.style={},
}
\newtheorem{problem}{Problem}
\newcommand*\BR{\mathds{R}}
\newcommand*\BN{\mathds{N}}
\newcommand*\CN{\mathcal{N}}
\newcommand*\TRANS{{\mathpalette\doTRANS\empty}}
\newcommand*\doTRANS[2]{\raisebox{\depth}{$\m@th#1\intercal$}}
\newcommand\DEFINED{\coloneqq}
\DeclareMathOperator\COLS{cols}
\DeclareMathOperator\DIAG{diag}
\DeclareMathOperator\Tr{Tr}
\newcommand\MATRIX[1]{\begin{bmatrix} #1 \end{bmatrix}}
\newcommand\SMATRIX[1]{\left[\begin{smallmatrix} #1 \end{smallmatrix}\right]}
\begin{document}

\title{{{}Optimal} control of {{}network}-coupled  subsystems: Spectral
decomposition and low-dimensional solutions}
\author{Shuang Gao~\IEEEmembership{Member,~IEEE,} and Aditya Mahajan~\IEEEmembership{Senior Member,~IEEE}%
\thanks{S. Gao and A. Mahajan are with the Department of Electrical and Computer Engineering, McGill University,
  Montreal, QC, Canada. 
  Emails: {\tt\small  sgao@cim.mcgill.ca, aditya.mahajan@mcgill.ca}.}%
  \thanks{The work of the second author was supported in part by  the
    Innovation for Defence Excellence and Security (IDEaS) Program of the
  Canadian Department of National Defence through grant CFPMN2-30}%
  \thanks{Preliminary version of this work was presented at the 58th IEEE Conference on Decision and Control, Nice, France, December, 2019.}
}
\maketitle
\begin{abstract}
  In this paper, we investigate optimal control of {{}network}-coupled
 subsystems where the dynamics and the cost couplings depend on an underlying undirected 
  weighted graph. 
 {The graph coupling matrix {{}in the dynamics} {may} be the adjacency matrix, the Laplacian matrix, or any other symmetric matrix corresponding to the underlying graph.} {{}The cost couplings can be any polynomial function of the underlying coupling matrix.}
  We use the spectral decomposition of the graph  {coupling}
  matrix to decompose the overall system into $({L+1})$ systems with decoupled
  dynamics and cost, where $L$ is the rank of the coupling matrix.  
{{}Furthermore,} the optimal control input at each subsystem can be computed by
  solving $({L_\textup{dist}+1})$ decoupled Riccati equations {}{where $L_\textup{dist}~(L_\textup{dist}\leq L)$ is the number of distinct non-zero eigenvalues of the coupling matrix}. A salient feature of the
  result is that the solution complexity depends {}{on the number of distinct eigenvalues of the coupling
  matrix} rather than the size of the network.
  Therefore, the proposed solution framework provides a scalable method for
  synthesizing and implementing optimal control laws for large-scale {{}network-coupled subsystems}.
\end{abstract}

\section{Introduction}
\subsection{Motivation}
The recent proliferation of low cost sensors and actuators has given rise to 
many networked control systems such as the Internet of Things, smart grids, smart buildings,
etc.,\@ where multiple subsystems are connected over a network. In such
systems, the evolution of the state of a subsystem depends on its local state
and local control input and is also influenced by the states and controls of its
neighbors. 
Such networks are often referred to as large-scale systems
or complex networks, and various aspects of such systems have been investigated since the early
1970s~\cite{ho1976directions,sandell1978survey}, including issues such as
controllability~\cite{liu2011controllability,yang1995structural}, observability~\cite{yan2015spectrum,yang1995structural}, control energy metric~\cite{pasqualetti2014controllability}, distributed control~\cite{saber2003consensus,olfati2007consensus,jadbabaie2003coordination,movric2013cooperative,lin2004local}, decentralized
control~\cite{sundareshan1991qualitative,yang1996decentralized,vsiljak2005control,bian2015decentralized}
and adaptive control~\cite{bian2015decentralized}.

A key theme for investigating large-scale  {networked control} systems is to identify conditions
under which the optimal control laws may be synthesized and implemented with
low-complexity. Such conditions include simplified control objectives (e.g.,
  consensus~\cite{saber2003consensus,olfati2007consensus,jadbabaie2003coordination,movric2013cooperative,lin2004local} and
synchronization~\cite{arenas2008synchronization}), simplified control inputs
(e.g., pinning control~\cite{gang1994controlling, grigoriev1997pinning,
  wang2002pinning} and ensemble
control~\cite{li2011ensemble}), simplified coupling between subsystems (e.g.,
  symmetric interconnections~\cite{lunze1986dynamics,grizzle1985structure,yang1995structural,sundareshan1991qualitative,
  yang1996decentralized}, exchangeable or anonymous
  subsystems~\cite{madjidian2014distributed,arabneydi2015team,arabneydi2016team,elliott2013discrete}, {{}sparse connections or structure reduction \cite{arditti2020graphical,benner2004solving}, hierarchical decompositions \cite{chang1989hierarchical}}
  and patterned
systems~\cite{hamilton2012patterned}), approximate optimality (e.g., mean-field games~\cite{HCM03,HMC06,lasry2006jeux1,
  LiZhangFeng:2008}, control based on approximate aggregations
\cite{aoki1968control},  approximate distributed control
\cite{2020SuboptimalityLQ,borrelli2008distributed}, and graphon control~\cite{ShuangPeterTAC18,ShuangPeterCDC19W2}). 
%

In this paper, we propose a decomposition method for large-scale network-coupled subsystems which
relies on the spectral decomposition of the dynamic and cost coupling between
the subsystems. 
Several related approaches have been considered in the literature. An
earlier approach similar in spirit to ours is~\cite{aoki1968control}, which
considered the problem of approximating a high-dimensional system with a
low-dimensional system using what was called state aggregation. Both exact
and approximate solutions were proposed. Spectral
decomposition of large-scale systems with symmetric interconnected subsystems
have been considered in~\cite{sundareshan1991qualitative,
yang1996decentralized}. Algebraic decomposition of mean-field coupled
subsystems has been considered in~\cite{elliott2013discrete,
arabneydi2015team, arabneydi2016team}.
A key feature which distinguishes our approach from {{}these} previous work is that our approach is applicable to
models where the coupling between agents are not {homogeneous} and
that we establish optimal solutions rather than an approximate solution.  
 Another line of related work is graphical games \cite{arditti2020graphical}, \cite[Chapter 6]{lewis2013cooperative} where the coupling of the utility function depends on an underlying graph. In contrast to these, we consider a control problem and propose a different type of decomposition.
 %

\subsection{Contributions of this paper}
In this paper, we investigate a control system with multiple subsystems
connected over an undirected graph. Each subsystem has a local state and takes
a local control action. The evolution of the state of each subsystem depends
on its local state and local control as well as a weighted combination (which
we call the network field) of the states and controls of its neighbors. {{}Moreover, the weights in the network field, which are represented by a coupling matrix, may correspond to the adjacency matrix, Laplacian matrix or any other symmetric matrix that characterizes the underlying graph.}
Each
subsystem is also coupled to its multi-hop neighbors via a quadratic
cost. {{}The cost couplings can be any polynomial function of the coupling  matrix of the underlying graph.} The objective is to choose the control inputs of each subsystem to
minimize the total cost over time. 
The above model is a linear quadratic regulation problem and a centralized solution
can be obtained by solving $nd_x \times n d_x$\dash{}dimensional Riccati
equation, where $n$ is the number of subsystems and $d_x$ is the dimension of
the state of each subsystem. In this paper, we propose an alternative solution
that has low complexity and may be implemented in {{} a local manner with aggregated (or projected) state information and local information. For some particular cases, the control can be implemented in a distributed manner that relies on neighbourhood information and local information}.

The main contributions of our paper are the following: 
\begin{itemize}
\item A spectral decomposition technique is devoloped to decomposes the {{}linear quadratic control problem for network-coupled dynamical subsystems} into $L
+ 1$ decoupled {{}subproblems}, where $L$ is the rank of the coupling matrix. 

  \item {{}These $L+1$ decoupled subproblems can be solved by  solving only $L_\textup{dist}+1$ decoupled Riccati equations of dimension $d_x \times d_x$, where $L_\textup{dist}$ is the number of distinct non-zero eigenvalues of the coupling matrix and $d_x$ is the state dimension of each subsystem. In contrast, a direct centralized solution requires solving an $n d_x \times n d_x$\dash{}dimensional Riccati equation where $n$ is the number of subsystems. We note that for any coupling matrix, the inequalities $L_\textup{dist}\leq L\leq n$ always hold.} 
    Thus the method proposed in this paper leads to considerable simplifications in synthesizing optimal control laws.

  \item To implement the optimal control input, each subsystem needs to know
    the $({L+1})d_x$\dash{}dimensional vector of local components of eigen and auxiliary states (which are defined later in the paper).
    In contrast, to implement the centralized solution, each
    subsystem needs to know the $nd_x$ dimensional global state. Thus, in applications such as \cite{udell2019big,zhou2013learning,ShuangPeterCDC19W1,macarthur2008symmetry} where $L \ll n$, the
method proposed in this paper leads to considerable simplification in implementing the optimal control law. 
\item {{}The solution method is extended to solve stochastic linear quadratic control problems for network-coupled subsystems. 
\item The solution method is applied to study consensus problems to establish optimal distributed control solutions for some particular cases. }
\end{itemize}
\subsection{Notations and definitions}

We use $\BN$ and $\BR$ to denote respectively the sets of natural and real numbers. The notation $A=[a_{ij}]$ means that $a_{ij}$ is the $(i,j)$th element of the matrix $A$. For a vector $v$, $v_i$ denotes its $i$th element.  For a
matrix $A$, $A^\TRANS$ denotes its transpose. Given vectors $v^1$, \dots,
$v^n$, $\COLS(v^1, \dots, v^n)$ denotes the matrix formed by horizontally
stacking the vectors. For any ${n \in \BN}$, $\mathds{1}_{n}$ denotes the $n$-dimensional vector of ones, $\mathds{1}_{n\times n}$ denotes the $n\times n$-dimensional matrix of ones, and $I_{n}$ denotes the $n\times n$-dimensional identity matrix.  

 A pair $(A, B)$ is \emph{stabilizable} if {there exists a matrix} $L$ such that $A + BL$ is Hurwitz (i.e., all its eigenvalues have negative real parts).
A pair $(A, C)$ is \emph{detectable}  if {there exists a matrix} $F$ such that $ A+ FC$ is Hurwitz.

\section{System model and problem formulation}
\subsection{System model}\label{subsec:sys-model}

Consider a network consisting of $n$ nodes connected over an undirected
weighted graph $\mathcal{G}(\mathcal{N},\mathcal{E}, W)$, where
$\mathcal{N}=\{1,\dots, n\}$ is the set of nodes, $ \mathcal{E} \subseteq
\mathcal{N}\times \mathcal{N}$ is the unordered set of edges, and $W=[w_{ij}]\in \BR^{n\times n}$ is the weighted adjacency matrix. 
 {
Let $M=[m_{ij}]
\in \BR^{n\times n}$ be some general symmetric coupling matrix corresponding to the underlying graph $\mathcal{G}(\mathcal{N},\mathcal{E}, W)$. For instance, $M$ may represent the underlying adjacency matrix (i.e., $M=W$) or represent the underlying  Laplacian matrix (i.e., $M=\textup{diag}(W\mathds{1}_n)-W$). }
For any node $i \in \mathcal N$,
$\CN_i \DEFINED \{j\in \CN : (i,j) \in \mathcal{E}\}$ denotes the set of
neighbors of node~$i$. Note that the edge set $\mathcal{E}$ is allowed to include self-loops. Therefore the set $\CN_i$ may contain node $i$.  

The system operates in continuous time for either a finite interval~$[0,T]$ or an infinite interval $[0,\infty)$. 
A state $x_i(t) \in \BR^{d_x}$ and a control input $u_i(t) \in
\BR^{d_u}$ are associated with each node $i \in \CN$. At time $t=0$, the system
starts from an initial state $(x_i(0))_{i \in \CN}$ and for $t > 0$, the state
of node~$i$ evolves according to
\begin{equation} \label{eq:dynamics}
  \dot x_i(t) = A x_i(t) + B u_i(t) + D  x_i^\mathcal{G}(t) + E  u_i^\mathcal{G}(t),
\end{equation}
where $A$, $B$, $D$ and $E$ are matrices of appropriate dimensions and 
\begin{equation}
  x_i^\mathcal{G}(t) = \sum_{j \in \CN_i} m_{ij} x_j(t)
  \quad\text{and}\quad
  u_i^\mathcal{G}(t) = \sum_{j \in \CN_i} m_{ij} u_j(t)
\end{equation}
are the locally perceived \emph{network field} of states and control actions
at node~$i$. {{}It is assumed that all the  different subsystems have the same parameter matrices $A$, $B$, $D$ and $E$.}

We follow an atypical representation of the ``vectorized'' dynamics.
Define 
\begin{align*}
  x(t) &= \COLS(x_1(t), \dots, x_n(t)),
  \\
  u(t) &= \COLS(u_1(t), \dots, u_n(t)),
\end{align*}
as the global state and control actions of the system, and 
\begin{align*}
  x^\mathcal{G}(t) &= \COLS(x^\mathcal{G}_1(t), \dots, x^\mathcal{G}_n(t)),\\
  u^\mathcal{G}(t) &= \COLS(u^\mathcal{G}_1(t), \dots, u^\mathcal{G}_n(t)),
\end{align*}
as the global network field of states and actions. Note that $x(t), x^\mathcal{G}(t) \in
\BR^{d_x \times n}$ and $u(t), u^\mathcal{G}(t) \in \BR^{d_u \times n}$ are matrices and not vectors. 
The system dynamics may be written as
\begin{equation} \label{eq:vec-dynamics}
  \dot x(t) = A x(t) + B u(t) + D x^\mathcal{G}(t) + E u^\mathcal{G}(t).
\end{equation}
Furthermore, we may write 
\[ {
  x^\mathcal{G}(t) = x(t) M^\TRANS = x(t) M
  ~\text{and}~
  u^\mathcal{G}(t) = u(t) M^\TRANS = u(t) M.}
\]

\subsection{System performance and control objective}

At any time $t \in [0, T)$, the
system incurs an instantaneous cost 
\begin{equation}\label{eq:inst-cost}
  c(x(t), u(t)) = 
  \sum_{i \in \CN} \sum_{j \in \CN}\bigl[
    g_{ij} x_i(t)^\TRANS Q x_j(t) + 
    h_{ij} u_i(t)^\TRANS R u_j(t) 
  \bigr],
\end{equation}
and at the terminal time~$T$, the system incurs a terminal cost
\begin{equation}\label{eq:term-cost}
  c_T(x(T)) = 
  \sum_{i \in \CN} \sum_{j \in \CN}
  g_{ij} x_i(T)^\TRANS Q_T x_j(T) ,
\end{equation}
where $Q$, $Q_T$, and $R$ are matrices of appropriate dimensions and $g_{ij}$
and $h_{ij}$ are real-valued weights. 

We are interested in the following optimization problems.
\begin{problem} \label{prob:main}
  Choose a control trajectory $u \colon [0, T) \to \BR^{d_u \times n}$ to
  minimize
  \begin{equation}\label{eq:cost}
  J(u) = \int_0^T c(x(t),u(t)) dt + c_T(x(T))
  \end{equation}
  {subject to the dynamics in \eqref{eq:vec-dynamics}.}
\end{problem}

\begin{problem} \label{prob:main-infinit-horizon}
  Choose a control trajectory $u \colon [0, \infty) \to \BR^{d_u \times n}$ to
  minimize
  \begin{equation}\label{eq:inf-cost}
  J(u) = \int_0^\infty c(x(t),u(t)) dt.
  \end{equation}
  subject to the dynamics in \eqref{eq:vec-dynamics}.
\end{problem}

\subsection{Assumptions on the model}
{{}In this section, we describe the assumptions imposed on the model. Let $G = [g_{ij}]$ and $H = [h_{ij}]$.}
\begin{enumerate}
{}
  \item[\textbf{(A0)}]The weight matrices $G$ and $H$ are polynomials of $M$, i.e., 
$  G = \sum_{k = 0}^{K_G} q_k M^k$ 
and 
$
  H = \sum_{k = 0}^{K_H} r_k M^k
$
where $K_G$ and $K_H$ denote the degree of the polynomials and
$\{q_k\}_{k=0}^{K_G}$ and $\{r_k\}_{k=0}^{K_H}$ are real-valued coefficients. 
\end{enumerate}
Since $M$ is real and symmetric, it has real eigenvalues. Let $L$ denote the
rank of $M$ and $\lambda^1, \dots, \lambda^L$ denote the non-zero eigenvalues.
For ease of notation, for $\ell \in \{1, \dots, L \}$, define
\[
  q^\ell = \sum_{k=0}^{K_G} q_k (\lambda^\ell)^k
  \quad\text{and}\quad
  r^\ell = \sum_{k=0}^{K_H} r_k (\lambda^\ell)^k.
\]

\begin{enumerate}
  \item[\textbf{(A1)}] The matrices $Q$ and $Q_T$ are symmetric and positive
    semi-definite and $R$ is symmetric and positive definite. 
  \item[\textbf{(A2)}] For $\ell \in \{1, \dots, L \}$, $q^\ell$ is non-negative and  $r^\ell$ is strictly positive. Moreover $q_0\geq 0$ and $r_0>0$.
\end{enumerate}

{{}Assumption (A0) ensures that $M$, $H$ and $G$ share the same eigenvectors, which allows us to decouple the dynamics and the cost based on the same spectral decomposition (see Section \ref{sec:spectral-decomposition} for more details).} 
Assumption {(A2)} ensures that for any $y\in \BR^n$, $y^\TRANS G y\geq 0$ and $y^\TRANS H y>0$. {{}Assumptions (A1) and (A2) ensure that $G\otimes Q$ and $G\otimes Q_T$ are symmetric positive
    semi-definite, and $H\otimes R$ is symmetric positive definite, which are standard sufficient conditions for finite-horizon LQR problems to have a unique optimal solution (see for instance \cite{liberzon2011calculus}).}
    
\subsection{Some remarks on the assumptions on the cost function}
{{}
Since the subsystems (or agents) are coupled in the dynamics over an underlying graph, it is reasonable to assume that the cost structure respects the same graph structure.  The polynomials allow us to consider cost coupling structures which may involve not only the immediate neighbourhood but also multiple-hop neighbourhood connections. 

We present a few examples with different coefficients 
$\{q_k\}_{k=0}^{K_G}$ and $\{r_k\}_{k=0}^{K_H}$ below:
\begin{enumerate}
	\item If $K_G=K_H=0$, $q_0=1$, $r_0=1$ and all other coefficients are zero, then $G=H=I$. In this case, the instantaneous cost reduces to 
		\[c(x(t),u(t))= \sum_{i=1}^n [x_i(t)^\TRANS Q x_i(t)+u_i(t)^\TRANS R u_i(t)].\]
	 Thus, the problem is equivalent to the social optimal control problem where the cost is the summation of the costs of all the subsystems. 
	\item If $K_G=2$, $K_H=1$, $q_0=1, q_1=-2, q_2=1$, $r_0=1$ and all other coefficients are zero, then $G=(I-M)^2$ and $H=1$. If, furthermore,  the matrix $M=\frac1n \mathds{1}_n\mathds{1}_n^\TRANS$, then the instantaneous cost reduces to 
	\[
	\begin{aligned}
		c(x(t), u(t)) &= \sum_{i=1}^n [(x_i(t)-\bar{x}(t))^\TRANS Q (x_i(t)-\bar{x}(t))\\
		& \quad \hspace{4em} +u_i(t)^\TRANS R u_i(t)],
	\end{aligned}
	\]
	where $\bar{x}(t)\triangleq \frac{1}{n} \sum_{i=1}^n x_i(t)$. 
	Thus, the problem is equivalent to the social optimal mean field control problem (similar to the cost considered in \cite{huang2012social}). 
	\item If $K_G=2$, $K_H=0$, $q_2=1$, $r_0=1$, all other coefficients are zero, and the coupling matrix is the Laplacian matrix, then 
	\[
	G = M^2, H = I,  M = \mathcal{L}\triangleq \textup{diag}(W\mathds{1}_n - W).
	\]
	 Then the instantaneous cost reduces to
	\[
	\begin{aligned}
		(x(t), u(t)) = \sum_{i=1}^n \Big[e_i(t)^\TRANS Q e_i(t) 
		 +u_i(t)^\TRANS R u_i(t)\Big],
	\end{aligned}
	\]
	where the local state error for subsystem $i$ is given by 
	\[e_i(t) \triangleq \sum_{j\in \mathcal{N}_i}w_{ij}(x_i(t)-x_j(t)).\]
	 If, furthermore, the coupling  only appears in the cost (i.e., there are no couplings in the dynamics) and $A=0$, this structure then produces the optimal control problem that can be associated with a distributed control problem (see Section V for more details).
	 \item 
	 {$K_G$ and $K_H$ can be $\infty$ as long as the limit of the corresponding polynomial series is well defined.  Such examples include the exponential function $G= \exp({M})=\sum_{k=0}^\infty \frac{1}{k!}M^k$, and the inverse function $G= (I-\gamma M)^{-1}= \sum_{k=0}^\infty \gamma^k M^k $ (when the spectral radius $\rho(M)$ of $M$ satisfies $\rho(M)<\gamma$).} \end{enumerate} 
}


\subsection {Salient features of the model}
\begin{figure}[htb]
  \centering
  \begin{subfigure}[b]{0.45\linewidth}
    \centering
    \begin{tikzpicture}[thick,scale=0.9]
      \node [agent] at (0, 0) (3) {$3$};
      \node [agent] at (1, 0) (4) {$4$}; 
      \node [agent] at (0, 1) (2) {$2$};
      \node [agent] at (1, 1) (1) {$1$};

      \path (1) edge node[above] {$2$} (2)
            (2) edge node[left]  {$2$} (3)
            (3) edge node[below] {$1$} (4)
			(4) edge node[right]  {$1$} (1);
    \end{tikzpicture}
    \caption{A graph $\mathcal G$}
  \end{subfigure}\hfill
  \begin{subfigure}[b]{0.45\linewidth}
    \centering
    \begin{tikzpicture}[thick,scale=0.9]
      \node [agent] at (0, 0) (3) {$3$};
      \node [agent] at (1, 0) (4) {$4$}; 
      \node [agent] at (0, 1) (2) {$2$};
      \node [agent] at (1, 1) (1) {$1$};

      \path (1) edge node[left, pos=0.2] {$5$} (3)
            (2) edge node[left, pos=0.8 ]  {$4$} (4)
            (1) edge[loop right] node {$5$} (1) 
            (2) edge[loop left]  node {$8$} (2)
            (3) edge[loop left]  node {$5$} (3)
            (4) edge[loop right] node {$2$} (4) 
            ;
    \end{tikzpicture}
    \caption{$2$-hop neighborhood of $\mathcal G$}
  \end{subfigure}%
  \caption{A graph and its $2$-hop neighborhood.}
  \label{fig:multi-hop}
\end{figure}
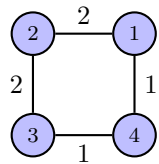
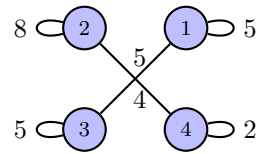

We highlight salient features of the model via an example. Consider a system
with $4$ nodes connected via a network shown in Fig.~\ref{fig:multi-hop}(a),
with 
\[
  G = q_0 I + q_1 M + q_2 M^2
  \text{ and }
  H = r_0 I + r_1 M + r_2 M^2,
\]
where $M$ and $M^2$ are the weighted adjacency matrix of the graph $\mathcal{G}$ and that of the $2$-hop
neighborhood of $\mathcal G$, respectively, given by
\[
  M = \MATRIX{ 0 & 2 & 0 & 1 \\
 2 & 0 & 2 & 0 \\
 0 & 2 & 0 & 1 \\
 1 & 0 & 1 & 0 \\ } \quad \text{and}
\quad 
  M^2 = \MATRIX{ 5 & 0 & 5 & 0 \\
 0 & 8 & 0 & 4 \\
 5 & 0 & 5 & 0 \\
 0 & 4 & 0 & 2 \\}.
\]

\subsubsection{Salient features of the dynamics}
For this example,
\begin{align*}
	&x_1^\mathcal{G}(t) = 2x_2(t) + x_4(t), \quad x_2^\mathcal{G}(t)=2x_1(t)+ 2x_3(t), \\
	&x_3^\mathcal{G}(t) = 2 x_2(t) + x_4(t), \quad x_4^\mathcal{G}(t) = x_1(t)+x_3(t).
\end{align*}
Thus, each subsystem is affected by its neighbors. The influence of each
neighbor is not homogeneous but depends on the weight associated with the
corresponding edge in the graph.  Furthermore, the network field $x^\mathcal{G}(t)$ is
not homogeneous and varies from subsystem to subsystem. 

\subsubsection{Salient features of the cost}

If $M$ is the weighted adjacency matrix of the graph $\mathcal G$, the
matrix $M^k$,
$k \in \BN$, represents the weighted adjacency matrix of the $k$-hop
neighborhood of $\mathcal G$. 
%
Thus,
\(
  G = q_0I + q_1 M + q_2 M^2
\)
means that each node has a coupling of $q_0$ with its own state, a
coupling of $q_1$ with its $1$-hop neighborhood and a coupling of $q_2$ with
its $2$-hop neighborhood. Similar interpretation holds for $H$. 
Note that
\[
  G = q_0 I +  q_1 M + q_2 M^2 = \SMATRIX{q_0+5q_2 & 2 q_1 & 5 q_2 &  q_0+q_1 \\
 2q_1 & q_0+8 q_2 & 2 q_1 & 4 q_2 \\
 5 q_2 & 2 q_1 & q_0+5 q_2 & q_1 \\
 q_1 & 4 q_2 & q_1 & q_0+2 q_2 \\
  }.
\]
Thus, the agents are not interchangeable, i.e., in general, $G_{ii} \neq
G_{jj}$ and $G_{ki} \neq G_{kj}$. 

\section{Spectral decomposition of the system}\label{sec:spectral-decomposition}

Since the weight matrix $M$ is real and symmetric, it admits a spectral
factorization. In particular, there exist non-zero eigenvalues $(\lambda^1, \dots,
\lambda^L)$ and orthonormal eigenvectors $(v^1, \dots, v^L)$ such that
\begin{equation} \label{eq:spectrum}
  M = \sum_{\ell = 1}^L \lambda^\ell v^\ell v^\ell\strut^\TRANS.
\end{equation}
In the rest of this section, we decompose the dynamics and the cost based on
the above spectral decomposition. Our decompositions may be viewed as generalizations of mean-field decompositions used in \cite{elliott2013discrete,arabneydi2015team,arabneydi2016team} to heterogenous networks.

\subsection{Spectral decomposition of the dynamics}

For $\ell \in \{1, \dots, L\}$, define eigenstates and eigencontrol actions
as 
\begin{align}
  x^\ell(t) &= x(t) v^\ell v^\ell\strut^\TRANS, 
  \label{eq:eigen-x}\\
  u^\ell(t) &= u(t) v^\ell v^\ell\strut^\TRANS,
  \label{eq:eigen-u}
\end{align}
respectively. 
Multiplying both sides of~\eqref{eq:vec-dynamics} by $v^\ell
v^\ell\strut^\TRANS$, we get
\begin{equation} \label{eq:eigen-dynamics}
  \dot x^\ell(t) = (A + \lambda^\ell D) x^\ell(t) 
  + (B + \lambda^\ell E) u^\ell(t), 
\end{equation}
where we have used the fact that $M v^\ell v^\ell\strut^\TRANS = \lambda^\ell
v^\ell v^\ell\strut^\TRANS$. Let $x^\ell_i(t)$ and $u^\ell_i(t)$ denote the
$i$-th column of these matrices, i.e., 
\begin{align*}
  x^\ell(t) = \COLS(x^\ell_1(t), \dots, x^\ell_n(t)),  \\
  u^\ell(t) = \COLS(u^\ell_1(t), \dots, u^\ell_n(t)).
\end{align*}
Therefore, the dynamics~\eqref{eq:eigen-dynamics} can be written as a collection
of decoupled ``local'' dynamics: for $i \in \CN$,
\begin{equation} \label{eq:eigen-dynamics-local}
  \dot x^\ell_i(t) = (A + \lambda^\ell D) x^\ell_i(t) 
  + (B + \lambda^\ell E) u^\ell_i(t).
\end{equation}
Using the spectral factorization~\eqref{eq:spectrum}, we may write:
\begin{align}
  x^{\mathcal{G}}(t) &= x(t) M = \sum_{\ell = 1}^L \lambda^\ell x^\ell(t), 
  \label{eq:eigen-z}\\
  u^{\mathcal{G}}(t) &= u(t) M = \sum_{\ell = 1}^L \lambda^\ell u^\ell(t).
  \label{eq:eigen-v}
\end{align}

Now, define auxiliary state and control actions as 
\[
  \breve x(t) = x(t) - \sum_{\ell = 1}^L x^\ell(t) 
  \quad\text{and}\quad
  \breve u(t) = u(t) - \sum_{\ell = 1}^L u^\ell(t).
\]
Then, by subtracting~\eqref{eq:eigen-dynamics} from~\eqref{eq:vec-dynamics}
and substituting~\eqref{eq:eigen-z} and~\eqref{eq:eigen-v}, we get
\begin{equation} \label{eq:vec-breve-dynamics}
  \dot {\breve x}(t) = A \breve x(t) + B \breve u(t).
\end{equation}

Note that $\breve x(t) \in \BR^{d_x \times n}$ and $\breve u(t) \in \BR^{d_u
\times n}$. Let $\breve x_i(t)$ and $\breve u_i(t)$ denote the $i$-th column
of these matrices, i.e., 
\begin{align*}
  \breve x(t) &= \COLS(\breve x_1(t), \dots, \breve x_n(t)),  \\
  \breve u(t) &= \COLS(\breve u_1(t), \dots, \breve u_n(t)).
\end{align*}
Therefore, the dynamics~\eqref{eq:vec-breve-dynamics} of the auxiliary state
can be written as a collection of decoupled ``local'' dynamics: 
\begin{equation} \label{eq:breve-dynamics}
  \dot {\breve x}_i(t) = A \breve x_i(t) + B \breve u_i(t), \quad i
\in \CN.
\end{equation}

The above decomposition may be summarized as follows. 
\begin{proposition} \label{prop:state-decomposition}
  The local state and control at each node $i \in \CN$ may be decomposed as 
  \begin{align}
    x_i(t) &= \breve x_i(t) + \sum_{\ell = 1}^L x^\ell_i(t),
    \label{eq:x-i} \\
    u_i(t) &= \breve u_i(t) + \sum_{\ell = 1}^L u^\ell_i(t),
    \label{eq:u-i}
  \end{align}
  where the dynamics of $\breve x_i(t)$ depend on only $\breve u_i(t)$ and are
  given by~\eqref{eq:breve-dynamics} and the dynamics of $x^\ell_i(t)$ depends
  on only $u^\ell_i(t)$ and are given by~\eqref{eq:eigen-dynamics-local}.
\end{proposition}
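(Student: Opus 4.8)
The plan is essentially to assemble the identities derived in the preceding subsection; no new idea is required, so the proof is short. First I would record the algebraic decomposition~\eqref{eq:x-i}--\eqref{eq:u-i}: by the very definition $\breve x(t) = x(t) - \sum_{\ell=1}^L x^\ell(t)$ we have $x(t) = \breve x(t) + \sum_{\ell=1}^L x^\ell(t)$ as an identity of $d_x \times n$ matrices, and reading off the $i$th column yields~\eqref{eq:x-i}; the same argument applied to $\breve u(t) = u(t) - \sum_{\ell=1}^L u^\ell(t)$ gives~\eqref{eq:u-i}.

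Next I would verify the claimed dynamics. For the eigenstates, multiply~\eqref{eq:vec-dynamics} on the right by $v^\ell v^{\ell\TRANS}$; since $v^\ell$ is an eigenvector of $M$ we have $x^{\mathcal G}(t) v^\ell v^{\ell\TRANS} = x(t) M v^\ell v^{\ell\TRANS} = \lambda^\ell x^\ell(t)$ and likewise $u^{\mathcal G}(t) v^\ell v^{\ell\TRANS} = \lambda^\ell u^\ell(t)$, which collapses the right-hand side to $(A + \lambda^\ell D) x^\ell(t) + (B + \lambda^\ell E) u^\ell(t)$, i.e.~\eqref{eq:eigen-dynamics}; taking the $i$th column gives~\eqref{eq:eigen-dynamics-local}, which involves only $x^\ell_i$ and $u^\ell_i$. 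For the auxiliary state, subtract the sum over $\ell$ of~\eqref{eq:eigen-dynamics} from~\eqref{eq:vec-dynamics}: using the spectral factorization~\eqref{eq:spectrum} in the form $x^{\mathcal G}(t) = \sum_\ell \lambda^\ell x^\ell(t)$ and $u^{\mathcal G}(t) = \sum_\ell \lambda^\ell u^\ell(t)$ (equations~\eqref{eq:eigen-z}--\eqref{eq:eigen-v}), the $D$ and $E$ terms cancel exactly and one is left with $\dot{\breve x}(t) = A\breve x(t) + B\breve u(t)$, hence~\eqref{eq:breve-dynamics} column-wise, which involves only $\breve x_i$ and $\breve u_i$.

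The only point that is more than bookkeeping --- and which I would flag as the crux --- is the cancellation in the auxiliary dynamics: it works precisely because $M = \sum_{\ell=1}^L \lambda^\ell v^\ell v^{\ell\TRANS}$ accounts for all of $M$, so that the network fields $x^{\mathcal G}$ and $u^{\mathcal G}$ lie entirely in the span of the eigenmodes $\{x^\ell\}_{\ell=1}^L$ and $\{u^\ell\}_{\ell=1}^L$; equivalently, the kernel component $\breve x(t) = x(t)\bigl(I - \sum_{\ell=1}^L v^\ell v^{\ell\TRANS}\bigr)$ is annihilated by $M$ and therefore experiences no coupling. Orthonormality of the $v^\ell$ enters only through this spectral factorization. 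Finally, I would remark that~\eqref{eq:breve-dynamics} and~\eqref{eq:eigen-dynamics-local} together exhibit the $L+1$ trajectory families $\breve x_i(\cdot)$, $x^1_i(\cdot), \dots, x^L_i(\cdot)$ as evolving autonomously under their respective controls, which is the decoupling asserted in the proposition.
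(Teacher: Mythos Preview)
Your proposal is correct and mirrors the paper's argument exactly: the paper presents Proposition~\ref{prop:state-decomposition} as a summary of the derivations immediately preceding it (namely \eqref{eq:eigen-dynamics}--\eqref{eq:breve-dynamics}), which are obtained by right-multiplying~\eqref{eq:vec-dynamics} by $v^\ell v^{\ell\TRANS}$, using the spectral identity $M v^\ell v^{\ell\TRANS} = \lambda^\ell v^\ell v^{\ell\TRANS}$, and then subtracting to get the auxiliary dynamics --- precisely the steps you outline. Your remark about the cancellation relying on the full spectral factorization~\eqref{eq:spectrum} (equivalently, $\breve x(t) M = 0$) is the same observation the paper later records as property~(P4) in the appendix.
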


\subsection{Spectral decomposition of the cost}

For any $n\times n$ matrix $P = [p_{ij}]$, any $d \times n$ matrices $x =
\COLS(x_1, \dots, x_n)$, and $y  = \COLS(y_1, \dots, y_n)$, 
we use the following short hand notation:
\begin{equation}\label{def:short-hand-cost}
    \langle x , y \rangle_{P} =
    \sum_{i\in \CN} \sum_{j \in \CN} p_{ij} x_i^\TRANS y_j.
\end{equation}

\begin{proposition}\label{prop:cost-decomposition}
  The instantaneous cost may be written as
  \[
    c(x(t), u(t)) = \langle x(t), Q x(t) \rangle_G + 
    \langle u(t), R u(t) \rangle_H, 
  \]
  which can be simplified as follows:
  \begin{align*}
    \langle x(t), Q& x(t) \rangle_G \\
    &= 
     \sum_{i \in \CN} \Bigl[ q_0 \breve x_i(t)^\TRANS Q \breve x_i(t) +
     \sum_{\ell = 1}^L q^\ell x^\ell_i(t) ^\TRANS Q x^\ell_i(t)\Bigr],
    \\
    \langle u(t), R& u(t) \rangle_H \\
    &= 
     \sum_{i \in \CN} \Bigl[ r_0 \breve u_i(t)^\TRANS R \breve u_i(t) 
     +
     \sum_{\ell = 1}^L r^\ell u^\ell_i(t) ^\TRANS R u^\ell_i(t) \Bigr].
  \end{align*}
\end{proposition}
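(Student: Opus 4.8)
The first displayed identity is immediate: comparing the short\dash{}hand~\eqref{def:short-hand-cost} with the instantaneous cost~\eqref{eq:inst-cost}, and noting that the $j$th column of $Q\,x(t)$ is $Q\,x_j(t)$ (and similarly for $R\,u(t)$), gives $c(x(t),u(t)) = \langle x(t), Q\,x(t)\rangle_G + \langle u(t), R\,u(t)\rangle_H$ at once. The content of the proposition is therefore the two simplifications, and the plan is to exploit Assumption~(A0): since $G$ and $H$ are polynomials in the symmetric matrix $M$, they are themselves symmetric and are diagonalized by the same orthonormal eigenbasis that produced the eigenstate/auxiliary\dash{}state decomposition in Proposition~\ref{prop:state-decomposition}.

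Concretely, I would introduce the spectral projections $E^\ell \DEFINED v^\ell v^\ell\strut^\TRANS$, $\ell\in\{1,\dots,L\}$, together with the complementary projection $\breve E \DEFINED I_n - \sum_{\ell=1}^L E^\ell$ onto the kernel of $M$. Orthonormality of the eigenvectors gives $E^\ell E^m = \delta_{\ell m}E^\ell$, $\breve E E^\ell = 0$, $\breve E^2 = \breve E$, and all of these matrices are symmetric. From the factorization~\eqref{eq:spectrum} one has $M^k = \sum_{\ell=1}^L (\lambda^\ell)^k E^\ell$ for every $k\ge 1$, whereas the degree\dash{}zero term must be written as $M^0 = I_n = \breve E + \sum_{\ell=1}^L E^\ell$; substituting both into $G = \sum_{k=0}^{K_G} q_k M^k$ and collecting the coefficient of each projection yields the key identity
\[
  G = q_0\,\breve E + \sum_{\ell=1}^L q^\ell E^\ell,
\]
with $q^\ell$ exactly the constant defined just before Assumption~(A1); the analogous computation gives $H = r_0\,\breve E + \sum_{\ell=1}^L r^\ell E^\ell$. (The same manipulation covers the convergent\dash{}series case $K_G,K_H=\infty$.) Being careful with this degree\dash{}zero term — i.e.\ remembering that $I_n\neq\sum_{\ell}E^\ell$ whenever $M$ is rank deficient, so that the auxiliary block genuinely carries the coefficient $q_0$ — is the only place where attention is required; the rest is bookkeeping.

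For the final step, since $G$ is symmetric, $\langle x(t), Q\,x(t)\rangle_G = \Tr\bigl(x(t)^\TRANS Q\,x(t)\,G\bigr)$ (this is just the Frobenius pairing of $G$ with the $n\times n$ matrix whose $(i,j)$ entry is $x_i(t)^\TRANS Q\,x_j(t)$). Plugging in the identity for $G$ and using the cyclic invariance of the trace together with the fact that $E^\ell$ and $\breve E$ are symmetric idempotents, each term collapses:
\[
  \Tr\bigl(x(t)^\TRANS Q\,x(t)\,E^\ell\bigr) = \Tr\bigl((x(t)E^\ell)^\TRANS Q\,(x(t)E^\ell)\bigr) = \Tr\bigl(x^\ell(t)^\TRANS Q\,x^\ell(t)\bigr),
\]
where I used $x(t)E^\ell = x(t)v^\ell v^\ell\strut^\TRANS = x^\ell(t)$ from~\eqref{eq:eigen-x}, and likewise $x(t)\breve E = \breve x(t)$, which gives $\Tr(\breve x(t)^\TRANS Q\,\breve x(t))$ for the remaining term. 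Writing each of these $n\times n$ traces as the sum of its diagonal entries, $\Tr(x^\ell(t)^\TRANS Q\,x^\ell(t)) = \sum_{i\in\CN}x^\ell_i(t)^\TRANS Q\,x^\ell_i(t)$ and $\Tr(\breve x(t)^\TRANS Q\,\breve x(t)) = \sum_{i\in\CN}\breve x_i(t)^\TRANS Q\,\breve x_i(t)$, produces the stated formula for $\langle x(t), Q\,x(t)\rangle_G$; the derivation for $\langle u(t), R\,u(t)\rangle_H$ is verbatim with $(Q,G,q_0,q^\ell)$ replaced by $(R,H,r_0,r^\ell)$. I do not anticipate any genuine obstacle beyond the rank\dash{}deficiency bookkeeping flagged above; as an alternative one could bypass the trace formalism and expand $\breve x_i(t) = x_i(t) - \sum_\ell x^\ell_i(t)$ directly, using $\sum_i v^\ell_i v^m_i = \delta_{\ell m}$, but the trace route is shorter.
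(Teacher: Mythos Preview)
Your proof is correct and takes a genuinely different route from the paper. The paper expands the \emph{state} via Proposition~\ref{prop:state-decomposition}, writing $x(t)=\breve x(t)+\sum_\ell x^\ell(t)$, and then uses bilinearity of $\langle\cdot,\cdot\rangle_G$ to split $\langle x,Qx\rangle_G$ into a $\breve x$\dash{}$\breve x$ term, a cross term, and an eigen\dash{}eigen term; it then invokes a list of nine auxiliary properties (orthogonality relations such as $\sum_i x_i^\ell(t)^\TRANS Q x_i^{\ell'}(t)=\delta_{\ell\ell'}\sum_i x_i^\ell(t)^\TRANS Q x_i^\ell(t)$) to simplify each piece and to kill the cross term. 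You instead decompose the \emph{weight matrix} $G$ itself as $q_0\breve E+\sum_\ell q^\ell E^\ell$ and push this through the trace identity $\langle x,Qx\rangle_G=\Tr(x^\TRANS Qx\,G)$, so that idempotency and symmetry of the projections do all the work and no cross terms ever appear. Your argument is shorter and more conceptual; the paper's is more elementary (no trace manipulations) but requires building up the catalogue of properties first. Your remark that one could alternatively expand $\breve x_i$ directly using $\sum_i v_i^\ell v_i^m=\delta_{\ell m}$ is, in fact, precisely the paper's approach.
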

	See Appendix for the proof.

\section{The main results: Structure and synthesis of optimal control strategies}

\subsection{Finite horizon setup}
The main result for the finite horizon setup is the following.

\begin{theorem} \label{thm:main-result-control-networks}
  For $\ell \in \{1,\dots, L\}$, let $P^\ell \colon [0, T] \to \BR^{d_x \times
  d_x}$ be the solution to the backward Riccati differential equation
  \begin{equation} \label{eq:P-ell}
  \begin{aligned}
        - &\dot P^\ell(t) =
    (A + \lambda^\ell D)^\TRANS P^\ell(t) + P^\ell(t) (A + \lambda^\ell D)
    \\
     &- P^\ell(t) (B + \lambda^\ell E) (r^\ell R)^{-1} (B + \lambda^\ell E)^\TRANS P^\ell(t) +
      q^\ell Q
  \end{aligned}
  \end{equation}
  with the final condition $P^\ell(T) = q^\ell Q_T$. Similarly, let $\breve P
  \colon [0,T] \to \BR^{d_x \times d_x}$ be the solution to the backward
  Riccati differential equation
  \begin{equation} \label{eq:P-breve}
    - \dot {\breve P}(t) = A^\TRANS \breve P(t) + \breve P(t) A
    - \breve P(t) B (r_0 R)^{-1} B^\TRANS \breve P(t) + q_0 Q
  \end{equation}
  with the final condition $\breve P(T) = q_0 Q_T$. 

  Then, under assumptions \textup{{}(A0)}, \textup{(A1)} and \textup{(A2)}, the optimal control strategy for
  Problem~\ref{prob:main} is given by 
  \begin{equation} \label{eq:opt-control}
    u_i(t) = - \breve K(t) \breve x_i(t) - \sum_{\ell=1}^L K^\ell(t) x^\ell_i(t),  \quad i \in \mathcal{N},
  \end{equation}
  where 
  \begin{equation*}
    \begin{aligned}
      \breve K(t) &= (r_0R)^{-1}{B^\TRANS}\breve P(t), \\
       K^\ell(t)   &= (r^\ell R)^{-1}(B+ \lambda^\ell E)^\TRANS P^\ell(t).
    \end{aligned}
  \end{equation*}
\end{theorem}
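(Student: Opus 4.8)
The plan is to use Propositions~\ref{prop:state-decomposition} and~\ref{prop:cost-decomposition} to recast Problem~\ref{prob:main} as a finite collection of mutually decoupled, standard finite-horizon LQR problems of dimension $d_x$, solve each by classical LQR theory, and reassemble the answer. The first thing I would record is that the map $u(\cdot)\mapsto\bigl(\breve u(\cdot),u^1(\cdot),\dots,u^L(\cdot)\bigr)$ induced by~\eqref{eq:eigen-u} and the definition of $\breve u$ is a linear bijection between control trajectories $u\colon[0,T)\to\BR^{d_u\times n}$ and tuples in the product of the ranges of the idempotent maps $X\mapsto Xv^\ell(v^\ell)^\TRANS$ and $X\mapsto X\bigl(I_n-\sum_{\ell}v^\ell(v^\ell)^\TRANS\bigr)$. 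This is immediate from the orthonormality of $v^1,\dots,v^L$, since those maps are exactly the projections attached to the orthogonal decomposition $\BR^n=\bigl(\bigoplus_{\ell=1}^L\operatorname{span}v^\ell\bigr)\oplus\ker M$; in particular the component trajectories may be chosen freely and independently.

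With this in hand I would argue as follows. By Proposition~\ref{prop:state-decomposition}, the component $\breve x(\cdot)$ is driven only by $\breve u(\cdot)$ via~\eqref{eq:vec-breve-dynamics}, each $x^\ell(\cdot)$ only by $u^\ell(\cdot)$ via~\eqref{eq:eigen-dynamics}, and the initial values $\breve x(0)$, $x^\ell(0)$ are fixed by $x(0)$. By Proposition~\ref{prop:cost-decomposition} applied to the running cost, together with the identical computation applied to the terminal cost~\eqref{eq:term-cost} (same bilinear form with $Q_T$ in place of $Q$), the objective~\eqref{eq:cost} splits as
\[
  J(u)=\breve J(\breve u)+\sum_{\ell=1}^L J^\ell(u^\ell),
\]
where $\breve J$ is the finite-horizon LQR cost for the system $(A,B)$ with running weights $(q_0Q,\,r_0R)$ and terminal weight $q_0Q_T$, and $J^\ell$ is the LQR cost for $(A+\lambda^\ell D,\,B+\lambda^\ell E)$ with running weights $(q^\ell Q,\,r^\ell R)$ and terminal weight $q^\ell Q_T$ (the outer sums over $i\in\CN$ in Proposition~\ref{prop:cost-decomposition} just turn each of these into $n$ uncoupled identical copies, one per column, equivalently one LQR problem whose ``state'' is the stacked matrix). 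Since the decomposition of $u$ carries no coupling constraint and each summand depends only on its own component, minimizing $J$ reduces to minimizing $\breve J$ and each $J^\ell$ separately. Each of these $L+1$ problems is a textbook finite-horizon LQR problem, and Assumptions~(A1)--(A2) give precisely the required definiteness: $q_0Q,q^\ell Q\succeq0$, $q_0Q_T,q^\ell Q_T\succeq0$, and $r_0R,r^\ell R\succ0$. Hence the Riccati differential equations~\eqref{eq:P-ell} and~\eqref{eq:P-breve} have unique solutions on all of $[0,T]$, the unique optimizers are the state feedbacks $u^\ell_i(t)=-K^\ell(t)x^\ell_i(t)$ and $\breve u_i(t)=-\breve K(t)\breve x_i(t)$ with the stated gains (see, e.g., \cite{liberzon2011calculus}, or derive it by completing the square against the candidate value functions $t\mapsto\sum_{i\in\CN}x^\ell_i(t)^\TRANS P^\ell(t)x^\ell_i(t)$ and $t\mapsto\sum_{i\in\CN}\breve x_i(t)^\TRANS\breve P(t)\breve x_i(t)$), and substituting into the reconstruction~\eqref{eq:u-i} yields~\eqref{eq:opt-control}.

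The step I expect to need the most care is the legitimacy of the termwise minimization in the previous paragraph — that passing from $u$ to $(\breve u,u^1,\dots,u^L)$ introduces neither a missing constraint nor a spurious one. This rests on the idempotency and mutual orthogonality of the projections $v^\ell(v^\ell)^\TRANS$ and $I_n-\sum_\ell v^\ell(v^\ell)^\TRANS$ (so the tuple reassembles to a unique $u$ and vice versa), and on the observation that the per-column LQR feedbacks $-\breve K(t)\breve x(t)$ and $-K^\ell(t)x^\ell(t)$ automatically lie in the correct column subspaces because $\breve x(t)$ and $x^\ell(t)$ already do and the projections are idempotent; thus the unconstrained column-wise optimizer is feasible for the (slightly redundant) representation used in the statement, and no feasibility gap opens up. Everything else is either quoted from Propositions~\ref{prop:state-decomposition}--\ref{prop:cost-decomposition} or is standard finite-horizon linear-quadratic theory.
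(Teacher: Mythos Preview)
Your proposal is correct and follows essentially the same approach as the paper: decompose the dynamics and cost via Propositions~\ref{prop:state-decomposition} and~\ref{prop:cost-decomposition}, solve the resulting decoupled finite-horizon LQR problems, and reassemble via~\eqref{eq:u-i}. If anything, you are more explicit than the paper about the point that the change of variables $u\mapsto(\breve u,u^1,\dots,u^L)$ is a bijection and that the unconstrained LQR optimizers automatically land in the correct projection subspaces; the paper simply asserts that (CP1) is equivalent to (CP2)--(CP3) without spelling this out.
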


\begin{proof}
  Consider the following collections of dynamical systems:
  \begin{itemize}
    \item Eigensystem $(\ell, i)$, 
      $\ell \in \{1, \dots, L\}$, $i \in \CN$, with state $x^\ell_i(t)$,
      control inputs $u^\ell_i(t)$, dynamics
      \begin{equation*}
        \dot x^\ell_i(t) = (A + \lambda^\ell D) x^\ell_i(t) + 
        (B + \lambda^\ell E) u^\ell_i(t),
      \end{equation*}
      and cost
      \begin{multline*}
        J^\ell_i(u^\ell_i) = 
        \int_{0}^T \bigl[
          q^\ell x^\ell_i(t)^\TRANS Q x^\ell_i(t) + 
          r^\ell u^\ell_i(t)^\TRANS R u^\ell_i(t)
        \bigr] dt
        \\ + q^\ell x^\ell_i(T)^\TRANS Q x^\ell_i(T).
      \end{multline*}

    \item Auxiliary system $i$, $i \in \CN$, with state $\breve x_i(t)$,
      control inputs $\breve u_i(t)$, dynamics
      \begin{equation*}
        \dot {\breve x}_i(t) = A \breve x_i(t) + B \breve u_i(t),
      \end{equation*}
      and cost
      \begin{multline*}
        \breve J_i(\breve u_i) = 
        \int_{0}^T \bigl[
          q_0 \breve x_i(t)^\TRANS Q \breve x_i(t) + 
          r_0 \breve u_i(t)^\TRANS R \breve u_i(t)
        \bigr] dt
        \\ + q_0 \breve x_i(T)^\TRANS Q \breve x_i(T).
      \end{multline*}
  \end{itemize}
  Note that all systems have decoupled dynamics and decoupled nonnegative cost. By
  Proposition~\ref{prop:cost-decomposition}, we have
  \[
    J(u) = \sum_{i \in \CN} \Bigl[ \breve J_i(\breve u_i) 
    + \sum_{\ell=1}^L J^\ell_i(u^\ell_i) \Bigr].
  \]
  Thus, instead of solving:
  \begin{description}
    \item[{}\textbf{(CP1)}] choose control trajectory $u \colon [0, T) \to
      \BR^{d_u \times n}$ to minimize $J(u)$,
  \end{description}
  we can equivalently solve the following optimal control problems:
  \begin{description}
    \item[{}\textbf{(CP2)}] choose control trajectory $u^\ell_i \colon [0, T) \to
      \BR^{d_u}$ to minimize $J^\ell_i(u^\ell_i)$ for $i \in \CN$, $\ell \in
      \{1, \dots, L\}$, 
    \item[{} \textbf{(CP3)}] choose control trajectory $\breve u_i \colon [0, T)
      \to \BR^{d_u}$ to minimize $\breve J_i(\breve u_i)$ for $i \in \CN$.
  \end{description}
  Given the solutions of Problems {{}(CP2)} and {{}(CP3)}, we can use 
  Proposition~\ref{prop:state-decomposition} and choose $u_i(t)$ according
  to~\eqref{eq:u-i}.

  Problems {{}(CP2)} and {{}(CP3)} are standard optimal control problems and their
  solution are given as follows. Let $P^\ell \colon [0,T] \to \BR^{d_x \times
  d_x}$ and $\breve P \colon [0, T] \to \BR^{d_x \times d_x}$ be as given
  by~\eqref{eq:P-ell} and~\eqref{eq:P-breve}. Then, for all $i \in \CN$, the
  optimal solution of {{}(CP2)} is given by
  $
    u^\ell_i(t) = K^\ell(t)
    x^\ell_i(t),  \ell \in \{1, \dots, L\},
  $
  and the solution of {{}(CP3)} is given by
  $
    \breve u_i(t) = \breve K(t) \breve x_i(t).
  $
  The result follows by combining the above two equations using~\eqref{eq:u-i}.
\end{proof}

\begin{remark}
  Based on the definition of $\breve x_i(t)$, the control in \eqref{eq:opt-control} can be equivalently written as 
  \[
    u_i(t) = - \breve K(t) x_i(t) - \sum_{\ell=1}^L \big(K^\ell(t)-\breve K(t)\big) x^\ell_i(t),
  \]
  where the first part represents a local state feedback and the second part represents offset terms proportional to eigen states. 
\end{remark}

\begin{remark}\label{rem:distinct}
  Although the eigenstates $\{x^\ell_i(t)\}_{\ell = 1}^L$ depend on the
  eigenvectors $(v^1, \dots, v^L)$, the corresponding Riccati
  equations~\eqref{eq:P-ell} only depend on the eigenvalues
  $(\lambda^1, \dots, \lambda^L)$. So, if the coupling matrix has repeated
  eigenvalues, as is the case when there are certain symmetries in the graph
  $\mathcal G$, eigendirections with the same eigenvalue have the same Riccati
  equation. Therefore, we only need to solve $L_{\textup{dist}} + 1$, Riccati
  equations, where $L_{\textup{dist}}$ denotes the number of \emph{distinct}
  non-zero eigenvalues of the coupling matrix.
\end{remark}

\begin{remark}
  The Riccati equtions~\eqref{eq:P-ell}--\eqref{eq:P-breve} are significantly
  simpler to solve than the naive centralized Riccati equation. Each Riccati
  equation in~\eqref{eq:P-ell}--\eqref{eq:P-breve} is of dimension $d_x \times
  d_x$, while the centralized Riccati equation is of dimension $n d_x \times n
  d_x$. So, even {}{if} the coupling matrix is full rank (i.e., $L = n$) {{}and all eigenvalues are distinct}, solving the
  $n$ ``one-dimensional'' Riccati
  equations~\eqref{eq:P-ell}--\eqref{eq:P-breve} is significantly simpler than
  solving one centralized ``$n$-dimensional'' Riccati equation. For graphs
  where $L \ll n$, these savings become even more drastic.
\end{remark}

\subsection{Infinite horizon setup}
{Let $Q^\frac12$ denote the symmetric positive semi-definite matrix that satisfies ${Q^\frac12}^\TRANS Q^\frac12=Q$.}
For infinite horizon problems, we further impose the following standard assumptions. 
\begin{enumerate}
  \item[\textbf{(A3)}] 
  $(A,B)$ is stabilizable and  $(q_0^\frac12 Q^\frac12,A)$ is detectable. 
  \item[\textbf{(A4)}]
  For all $\ell \in\{1,...,L\}$, $(A+ \lambda^\ell D, B+ \lambda^\ell E)$ is stabilizable and $(A+ \lambda^\ell D,{q^\ell}^\frac12 Q^\frac{1}{2})$ is detectable .
  \end{enumerate}
\begin{theorem} \label{thm:inf-horizonmain-result}
 {Suppose assumptions \textup{{}(A0)}\textendash\textup{(A4)} hold.} For $\ell \in \{1,\dots, L\}$, let $P^\ell \in \BR^{d_x \times
  d_x}$ be the unique {symmetric} positive semi-definite solution to the algebraic Riccati equation
  \begin{equation} \label{eq:inf-P-ell}
  \begin{aligned}
       0& =
    (A + \lambda^\ell D)^\TRANS P^\ell+ P^\ell (A + \lambda^\ell D)
    \\
     &\quad- P^\ell (B + \lambda^\ell E) (r^\ell R)^{-1} (B + \lambda^\ell E)^\TRANS P^\ell +
      q^\ell Q.
  \end{aligned}
  \end{equation}
  Similarly, let $\breve P \in \BR^{d_x \times d_x}$ be the unique {symmetric} positive semi-definite solution to the algebraic
  Riccati equation
  \begin{equation} \label{eq:inf-P-breve}
    0 = A^\TRANS \breve P + \breve P A
    - \breve P B (r_0 R)^{-1} B^\TRANS \breve P+ q_0 Q.
  \end{equation}
  Then the optimal control strategy for
  Problem~\ref{prob:main-infinit-horizon} is given by 
  \begin{equation} \label{eq:inf-opt-control}
    u_i(t) = - \breve K \breve x_i(t) - \sum_{\ell=1}^L K^\ell x^\ell_i(t), 
  \end{equation}
  with $\breve K = (r_0R)^{-1}{B^\TRANS}\breve P\text{ and } K^\ell   = (r^\ell R)^{-1}(B+ \lambda^\ell E)^\TRANS P^\ell.$
\end{theorem}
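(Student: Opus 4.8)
The plan is to follow the same route as the proof of Theorem~\ref{thm:main-result-control-networks}, with the finite-horizon LQR facts replaced by their infinite-horizon (algebraic Riccati) analogues. Since the identities in Propositions~\ref{prop:state-decomposition} and~\ref{prop:cost-decomposition} hold instantaneously at each time~$t$, integrating over $[0,\infty)$ gives
\[
  J(u) = \sum_{i \in \CN} \Bigl[ \breve J_i(\breve u_i) + \sum_{\ell=1}^L J^\ell_i(u^\ell_i) \Bigr],
\]
where now $\breve J_i(\breve u_i) = \int_0^\infty \bigl[ q_0 \breve x_i(t)^\TRANS Q \breve x_i(t) + r_0 \breve u_i(t)^\TRANS R \breve u_i(t) \bigr]\, dt$ is governed by the decoupled dynamics~\eqref{eq:breve-dynamics}, and $J^\ell_i(u^\ell_i) = \int_0^\infty \bigl[ q^\ell x^\ell_i(t)^\TRANS Q x^\ell_i(t) + r^\ell u^\ell_i(t)^\TRANS R u^\ell_i(t) \bigr]\, dt$ is governed by~\eqref{eq:eigen-dynamics-local}. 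As in the finite-horizon case, the map from $u$ to its eigen- and auxiliary components defined by~\eqref{eq:eigen-u} is linear and invertible, with inverse given by~\eqref{eq:u-i}; hence Problem~\ref{prob:main-infinit-horizon} decouples into the family of independent infinite-horizon LQR problems obtained from (CP2) and (CP3) by letting $T \to \infty$ and deleting the terminal cost.

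For each of these subproblems I would invoke the standard infinite-horizon LQR theorem (e.g.,~\cite{liberzon2011calculus}). By assumptions (A1)--(A2) the control weights $r_0 R$ and $r^\ell R$ are symmetric positive definite and the state weights $q_0 Q$ and $q^\ell Q$ are symmetric positive semi-definite; writing $q_0 Q = (q_0^{1/2} Q^{1/2})^\TRANS (q_0^{1/2} Q^{1/2})$ and $q^\ell Q = ((q^\ell)^{1/2} Q^{1/2})^\TRANS ((q^\ell)^{1/2} Q^{1/2})$, assumptions (A3) and (A4) furnish exactly the stabilizability and detectability pairs required. Therefore each of~\eqref{eq:inf-P-ell} and~\eqref{eq:inf-P-breve} admits a unique symmetric positive semi-definite solution, the associated closed-loop matrix $A - B\breve K$ (resp.\ $(A+\lambda^\ell D) - (B+\lambda^\ell E)K^\ell$) is Hurwitz, and the unique optimal control of the subproblem is the stationary feedback $\breve u_i(t) = -\breve K \breve x_i(t)$ (resp.\ $u^\ell_i(t) = -K^\ell x^\ell_i(t)$) with the gains as stated. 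Reassembling these through~\eqref{eq:u-i} produces~\eqref{eq:inf-opt-control}; the consistency of this reassembly --- that the reconstructed $u$ has eigen- and auxiliary components equal to the ones just optimized --- follows because the feedback gains act columnwise and the projections $v^\ell (v^\ell)^\TRANS$ are idempotent and mutually orthogonal.

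The one step that requires genuine care, rather than bookkeeping, is the equivalence of the two optimization problems at the level of \emph{values}: one must argue that $\inf_u J(u)$ equals the sum of the infima of the subproblems, not merely that it is bounded above by it. This rests on the decomposition being a bijection between admissible (finite-cost) control trajectories: any $u$ with $J(u)<\infty$ yields, through the displayed identity, components of finite subproblem cost, and conversely the stabilizing feedbacks constructed above generate components whose reconstruction via~\eqref{eq:u-i} is an admissible global control with finite cost. Since each subproblem has a \emph{unique} optimizer, the assembled control is the unique optimizer of Problem~\ref{prob:main-infinit-horizon}, namely~\eqref{eq:inf-opt-control}. As a byproduct, because every closed-loop subsystem is Hurwitz, the closed-loop dynamics~\eqref{eq:vec-dynamics} under~\eqref{eq:inf-opt-control} are asymptotically stable, so the optimal cost is indeed finite.
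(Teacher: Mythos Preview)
Your proposal is correct and follows essentially the same approach as the paper, which merely remarks that the proof parallels that of Theorem~\ref{thm:main-result-control-networks} with finite-horizon costs replaced by infinite-horizon ones and with (A3)--(A4) guaranteeing existence of the algebraic Riccati solutions. In fact your write-up is more careful than the paper's sketch, since you explicitly address the bijectivity of the decomposition and the consistency of the reassembled control.
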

The proof follows along the similar lines as the proof of Theorem~\ref{thm:main-result-control-networks}. {{}Under the extra assumptions (A3) and (A4)}, one only
 needs to  replace the finite horizon costs with the infinite horizon costs and then solve  decoupled LQR problems  by solving the corresponding algebraic Riccati equations.  (A3) and (A4) ensure the existence of solutions to the algebraic Riccati equations \eqref{eq:inf-P-breve} and \eqref{eq:inf-P-ell} (see e.g. \cite{kucera1972contribution}).

\subsection{Remarks on {{}the information structure} and the implementation of the optimal strategy}

Since we are interested in regulating a deterministic system, we may implement the optimal control law either using open-loop (i.e. pre-computed) control inputs or using closed-loop (i.e. state feedback) control inputs. For both implementations, the eigenvalues $\{\lambda^\ell\}_{\ell =1}^L$ need to be known at all subsystems. 

For the open-loop implementation, one can write 
\begin{equation} \label{equ:implementation}
  u_i(t) = -\breve K(t) \breve \Phi(t,0) \breve x_i(0) 
   - \sum_{\ell=1}^L K^\ell(t) \Phi^{\ell}(t,0) x_i^\ell(0),
\end{equation}
where the state transition matrices $\breve \Phi(t,0)$ and $\Phi^\ell(t,0)$ are given by 
\begin{align}
	\breve \Phi(t,0) &= \exp\Big(\textstyle \int_0^t \big(A -
    B\breve K(\tau)\big) d\tau \Big), \\
	\Phi^\ell(t,0) &= \exp\Big(\textstyle \int_0^t \big(A+ \lambda^\ell D 
     - (B+\lambda^\ell E)K^\ell(\tau) \big)d\tau\Big). \label{eq:phi-ell}
\end{align}
Thus, to implement the control action, subsystem~$i$ needs to know $\breve x_i(0)$ and $\{x_i^\ell(0)\}_{\ell=1}^L$, which can be obtained using one of the following three information structures:

\begin{enumerate}
  \item All subsystems know the initial condition $x(0)$ and the
    eigendirections $\{v^\ell\}_{\ell=1}^L$. Using these, subsystem $i$ can
    compute $\{x_i^\ell(0)\}_{\ell=1}^L$ and $\breve x_i(0)$, and 
    implement~\eqref{equ:implementation}.

  \item Subsystem $i, i \in \CN$, knows its local initial state $x_i(0)$ and
    its local initial eigensystem states $\{x_i^\ell(0)\}_{\ell=1}^L$. Then subsystem $i$ can compute $\breve x_i(0)$ and
    implement~\eqref{equ:implementation}.

  \item All subsystems knows the initial state $\{x(0)v^\ell\}_{\ell=1}^L$. In
    addition, subsystem $i$ knows $v_i:=(v^1_i, \cdots, v_i^L)$ and its local
    initial state $x_i(0)$. Then subsystem $i$ can compute
    $\{x_i^\ell(0)\}_{\ell=1}^L$ and $\breve x_i(0)$, and 
    implement~\eqref{equ:implementation}.
\end{enumerate}

The closed-loop implementation, which is given by~\eqref{eq:opt-control} or \eqref{eq:inf-opt-control}, can be obtained by using one of the three information structures described above
with $x(0)$, $x_i(0)$ and $x_i^\ell(0)$  replaced by $x(t)$, $x_i(t)$ and
$x_i^\ell(t)$, respectively.

Furthermore, for the information structures in 2) and 3), a mixed implementation which combines open-loop and close-loop implementations can also be obtained via only replacing $x_i(0)$ by $x_i(t)$ in 2) and 3). {{}In the mixed implementation, for any subsystem $i \in \mathcal{N}$, the close-loop part corresponds to the individual state $x_i(t)$ and the open-loop part corresponds to the terms  $\{x_i^\ell(0)\}_{\ell=1}^L$ or $\{x(0)v^\ell\}_{\ell=1}^L$ which involve the aggregate of initial states of all subsystems.}

\section{Applications to consensus}

Consensus refers to a distributed coordination problem in which nodes
connected over a graph update their local states based on the states of their
neighbors. The simplest objective is for all nodes to converge to a
``consensus'' value starting from any initial state $x(0)$, i.e., 
\[
  \lim_{t\to \infty }\|x_i(t)- x_j(t) \| =0,  
  \quad \forall i,j \in \mathcal{N}.
\]
There are various consensus protocols (i.e., rules to update the state at each node as a function of the state of the nearest neighbors and its own state), which have different rates of convergence. We refer readers to \cite{saber2003consensus,olfati2007consensus,jadbabaie2003coordination,mesbahi2010graph} for an overview.
Often these consensus protocols are hand crafted based on intuitions. In this section, we show that the standard consensus protocol naturally emerges as the optimal solution of an appropriately chosen networked control problem. 

In particular, consider a (non-negatively) weighted connected  undirected graph $\mathcal{G}(\mathcal{V}, \mathcal{E}, W)$ where $W$ represents its adjacency matrix. Now consider the system dynamics
\begin{equation}\label{eq:consensus-dynamicsB=I}
	\dot{x}_i(t) = u_i(t), \quad i\in \mathcal{N}
\end{equation}
which is a special case of \eqref{eq:dynamics} with $A=0, B=I, D=0$, and $E=0$. Furthermore, consider the cost function
  \begin{equation}\label{eq:consensus-costB=I}
    c(x(t), u(t)) =\langle x(t), Q x(t)\rangle_{{M}^2}+ \langle u(t), R u(t)\rangle_{I}
  \end{equation}
where $M=\DIAG(W\mathds{1}_n) - W$ is the graph Laplacian matrix, and $Q$ and $R$ are arbitrary {symmetric} positive definite matrices. 
It is well known that the rank of the Laplacian matrix of a (non-negatively weighted) connected graph is $n-1$ and all non-zero eigenvalues are positive. Thus, $L=n-1$ for this setup. 
\begin{lemma}
 The solution to Problem \ref{prob:main-infinit-horizon} with the dynamics in \eqref{eq:consensus-dynamicsB=I} and the cost in \eqref{eq:consensus-costB=I} is given by 
\begin{align}
	u_i(t)&= - R^{-1} \Pi\sum_{\ell=1}^{n-1} \lambda^\ell x^\ell_i(t) \label{eq:consenus-eigen-decom2-B=I}, \quad i\in \mathcal{N},
\end{align}
 where $\Pi$ denotes the {symmetric} positive semi-definite solution to 
$\Pi R^{-1}\Pi = Q.$
\end{lemma}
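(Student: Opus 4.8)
The plan is to specialize Theorem~\ref{thm:inf-horizonmain-result}, together with the underlying spectral decomposition, to the data of this problem. The dynamics in~\eqref{eq:consensus-dynamicsB=I} have $A=0$, $B=I$, $D=0$, $E=0$, so $A+\lambda^\ell D = 0$ and $B+\lambda^\ell E = I$ for every~$\ell$. The cost in~\eqref{eq:consensus-costB=I} satisfies Assumption~\textbf{(A0)} with $G = M^2$ (i.e.\ $q_0=q_1=0$, $q_2=1$) and $H = I$ (i.e.\ $r_0=1$, $r_k=0$ for $k\ge 1$), so $q^\ell = (\lambda^\ell)^2$, $q_0 = 0$, $r^\ell = 1$ and $r_0 = 1$. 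Since $\mathcal{G}$ is connected, $M$ has rank $n-1$ with all non-zero eigenvalues strictly positive, so $L = n-1$ and $\lambda^\ell>0$. With $Q,R\succ 0$, Assumptions~\textbf{(A1)}--\textbf{(A2)} hold and Assumption~\textbf{(A4)} holds for each eigensystem, since $(0,I)$ is stabilizable and $(0,\lambda^\ell Q^{\frac12})$ is detectable ($\lambda^\ell Q^{\frac12}$ being invertible). The only assumption that fails is the detectability half of~\textbf{(A3)}, because $q_0=0$ makes the auxiliary pair $(0,0)$; that subsystem will therefore be handled directly rather than by quoting the theorem verbatim.

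First I would apply Propositions~\ref{prop:state-decomposition} and~\ref{prop:cost-decomposition} to write $J(u) = \sum_{i\in\CN}\bigl[\breve J_i(\breve u_i) + \sum_{\ell=1}^{n-1} J^\ell_i(u^\ell_i)\bigr]$, where $\breve J_i(\breve u_i) = \int_0^\infty \breve u_i(t)^\TRANS R\,\breve u_i(t)\,dt$ (no auxiliary state penalty, since $q_0=0$) with $\dot{\breve x}_i = \breve u_i$, and $J^\ell_i(u^\ell_i) = \int_0^\infty\bigl[(\lambda^\ell)^2 x^\ell_i(t)^\TRANS Q\,x^\ell_i(t) + u^\ell_i(t)^\TRANS R\,u^\ell_i(t)\bigr]\,dt$ with $\dot x^\ell_i = u^\ell_i$. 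All summands are non-negative, so each may be minimized separately. For the auxiliary subsystem, $R\succ 0$ together with the absence of a state cost makes $\breve u_i\equiv 0$ optimal with value $0$; equivalently~\eqref{eq:inf-P-breve} degenerates to $\breve P R^{-1}\breve P = 0$, whose unique symmetric positive semi-definite solution is $\breve P = 0$, so $\breve K = 0$.

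For eigensystem~$\ell$, equation~\eqref{eq:inf-P-ell} reduces to $P^\ell R^{-1} P^\ell = (\lambda^\ell)^2 Q$, and I would verify that $P^\ell = \lambda^\ell\Pi$ solves it: it is symmetric positive semi-definite because $\lambda^\ell>0$ and $\Pi\succeq 0$, and $(\lambda^\ell\Pi)R^{-1}(\lambda^\ell\Pi) = (\lambda^\ell)^2\,\Pi R^{-1}\Pi = (\lambda^\ell)^2 Q$ by the defining relation $\Pi R^{-1}\Pi = Q$ (which has a unique such solution: writing $\Pi = R^{\frac12}YR^{\frac12}$ turns it into $Y^2 = R^{-\frac12}QR^{-\frac12}\succ 0$, whose unique symmetric positive semi-definite root is $Y = (R^{-\frac12}QR^{-\frac12})^{\frac12}$). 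By the uniqueness asserted in Theorem~\ref{thm:inf-horizonmain-result}, $P^\ell = \lambda^\ell\Pi$, hence $K^\ell = (r^\ell R)^{-1}(B+\lambda^\ell E)^\TRANS P^\ell = R^{-1}P^\ell = \lambda^\ell R^{-1}\Pi$, so the optimal eigencontrol is $u^\ell_i(t) = -\lambda^\ell R^{-1}\Pi\,x^\ell_i(t)$. Combining the auxiliary and eigencontrols through~\eqref{eq:u-i} then gives $u_i(t) = \breve u_i(t) + \sum_{\ell=1}^{n-1} u^\ell_i(t) = -R^{-1}\Pi\sum_{\ell=1}^{n-1}\lambda^\ell x^\ell_i(t)$, the claimed formula. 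I expect the only real subtlety to be the failure of~\textbf{(A3)}: Theorem~\ref{thm:inf-horizonmain-result} cannot be invoked as a black box, so the auxiliary subproblem must be dispatched on its own (immediate, since its cost penalizes only the control), while the $L$ eigensystem subproblems are genuine, well-posed infinite-horizon LQR problems to which the algebraic Riccati theory applies as stated.
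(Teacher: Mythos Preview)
Your proposal is correct and follows the same route as the paper's proof: apply Theorem~\ref{thm:inf-horizonmain-result}, note that $q_0=0$ forces $\breve P=0$ and hence $\breve K=0$, and rescale $P^\ell=\lambda^\ell\Pi$ to obtain the closed-form gain. You are in fact more careful than the paper on one point: the paper asserts that (A0)--(A4) ``are obviously satisfied,'' whereas you correctly flag that the detectability half of (A3) fails (since $q_0=0$ and $A=0$) and dispatch the auxiliary subproblem directly.
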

\begin{proof}
 	Since $B=I$, $Q>0$, $R>0$, $q^\ell = (\lambda^\ell)^2>0$, $r^\ell = 1$ $q_0=0$, and $r_0=1$, (A0)\textendash(A4) are obviously satisfied.  
 	An application of Theorem \ref{thm:inf-horizonmain-result} yields the following optimal control law
\begin{equation}\label{eq:consenus-eigen-decomB=I}
	u_i(t)= -\sum_{\ell=1}^{n-1}  R^{-1}P^\ell x^\ell_i(t), \quad i\in \mathcal{N},
\end{equation}
where $P^\ell$ is the {symmetric} positive semi-definite solution to the algebraic Riccati equation
\begin{equation}\label{eq:consensus-RiccatiPl}
	0 = - P^\ell  R^{-1}  P^\ell + (\lambda^\ell)^2Q.
\end{equation}
 Note that $q_0=0$ in this example  implies the solution to the auxiliary Riccati equation in \eqref{eq:inf-P-breve} is $\breve P=0$.  Hence $\breve K =0$  in \eqref{eq:inf-opt-control} and the control law \eqref{eq:consenus-eigen-decomB=I} does not contain the auxiliary part.
Let
$\Pi = {(\lambda^\ell)}^{-1}P^\ell $. Substituting $P^\ell$  in \eqref{eq:consensus-RiccatiPl}, $\Pi$ is then given by the {symmetric} positive semi-definite solution to 
$\Pi R^{-1}\Pi = Q.$
 Hence the optimal control law is given by \eqref{eq:consenus-eigen-decom2-B=I}. ~~%
\end{proof}

Now, recall that 
\begin{align*}
	\sum_{\ell=1}^{n-1} \lambda^\ell x_i^\ell(t) &= x_i^\mathcal{G} (t) = \sum_{j\in \mathcal{N}}m_{ji}x_j= \sum_{j\in \mathcal{N}}m_{ij}x_j\\
	&= \sum_{j\in \mathcal{N}}w_{ij}(x_i-x_j)
\end{align*}
Therefore, the optimal control may be written as
\begin{equation}\label{eq:consenus-protocolB=I}
	u_i(t)= -R^{-1}\Pi \sum_{j\in \mathcal{N}}w_{ij}(x_i(t)-x_j(t)),   \quad i\in \mathcal{N}.
\end{equation}
Thus the optimal control law is the same as the standard consensus protocol in
\cite{saber2003consensus,olfati2007consensus,jadbabaie2003coordination}. A
similar result was established in~\cite[Theorem 4.6]{cao2009optimal}, using a
much more sophisticated proof argument.

\section{Generalizations to stochastic systems} \label{sec:stoc}
\subsection{Stochastic networked control problem}
In this section we consider a model similar to  Section~\ref{subsec:sys-model} but with stochastic dynamics. As before, there are $n$ subsystems that are connected over an undirected weighted graph $\mathcal{G}(\mathcal{N},\mathcal{E}, W)$, with an associated symmetric coupling matrix $M$. For any $i\in \mathcal{N}$, the state $x_i(t)$, control $u_i(t)$, and the network fields $x_i^\mathcal{G}(t)$ and $u_i^\mathcal{G}(t)$ are defined as before. The difference is that rather than being deterministic, the system dynamics are stochastic and are given by
\begin{equation}\label{eq:stochastic-dynamics}
  dx_i(t) = \Big[A x_i(t) + B u_i(t) + Dx^\mathcal{G}_i(t) + E u_i^\mathcal{G}(t)\Big]dt + F d w_i(t),
\end{equation}
for all $i\in \mathcal{N}$, where the matrices $A, B, D, E$ and $F$ are as before, $F$ is a matrix of an appropriate dimension,
 the initial states $(x_i(0))_{i\in \mathcal{N}}$ are deterministic,  
 and $\{w_i(t)\in \BR^{d_w}:i\in \mathcal{N}, t\geq 0\}$ are  standard ($d_w$-dimensional) Brownian motions that are independent across nodes.

 As before, there is an instantaneous cost $c(x(t),u(t))$ for $t\in [0,T)$, and a terminal cost $c(x(T))$,  given by \eqref{eq:inst-cost} and \eqref{eq:term-cost}.

Let $\mathcal{F}(t)$ denote the $\sigma$-algebra generated by $\{w(\tau): 0\leq \tau \leq t\}$ where $w(\tau)\triangleq \COLS\big(w_1(\tau),\ldots,w_n(\tau)\big)$. 

  We are interested in the following optimization problem. 
\begin{problem} \label{prob:stochatic}
  Choose an $\mathcal{F}(t)$-adapted  control  $u \colon [0, T) \to \BR^{d_u \times n}$
  to
  minimize
  \begin{equation}\label{eq:stochastic-cost}
  J(u) = \mathbb{E}\left[\int_0^T c(x(t),u(t)) dt +  c_T(x(T))\right],
  \end{equation}
\end{problem}
subject to the system dynamics in \eqref{eq:stochastic-dynamics} and initial conditions $(x_i(0))_{i \in \mathcal{N}}$.

\subsection{Decompositions}
Recall that $ w(t) \triangleq \COLS\big(w_1(t),\ldots,w_n(t)\big) \in \BR^{d_w\times n}$. 
We introduce the following noise processes in eigen directions and the auxiliary direction: for any $i \in \mathcal{N}$ and $\ell \in \{1,...,L\}$, 
\begin{equation*}
  \begin{aligned}
    &  w^\ell_i(t) \triangleq w(t)v^\ell v^\ell_i  ~~\text{and}~~ \breve w_i(t) \triangleq w_i(t) -\sum_{\ell =1}^L w_i^\ell(t). 
  \end{aligned}
\end{equation*}
The corresponding compact matrix representations are given by
\begin{equation*}
  \begin{aligned}
    \breve w(t) &\triangleq \COLS\big(\breve  w_1(t),\ldots,\breve  w_n(t)\big),\\
   w^\ell(t)  &\triangleq \COLS\big(w_1^\ell(t),\ldots,w_n^\ell(t)\big).
  \end{aligned}
\end{equation*}
Clearly, $w^\ell(t)= w(t)v^\ell v^\ell\strut^\TRANS$ and  $\mathds{E}[w^\ell_i]=\mathds{E} [\breve w_i] =0$.

\begin{lemma}\label{lem:w-w-ell}
The following statements hold for all $t\in[0,T]$, $i, j \in\{1,\ldots,n\}$, $\ell, h \in \{1,\ldots, L\}$:
\begin{enumerate}
  \item $\breve w_i(t)$ and $w_i^\ell(t)$ are independent.
  \item $\breve w_j(t)$ and $w_i^\ell(t)$ are  independent if and only if $v_j^\ell(v_j^\ell -v_i^\ell) = 0$. 
  \item $\breve w_i(t)$ and $\breve w_j(t)$ are independent if and only if $v_i^\ell v_j^\ell = 0$. 
  \item $w_i^\ell(t)$ and $w_j^\ell(t)$ are  independent if and only if $v_i^\ell v_j^\ell = 0$.
  \item If $i\neq j$ and $\ell \neq h$, then $w_i^\ell(t)$ and $w_j^h(t)$ are independent. 
\end{enumerate}
\end{lemma}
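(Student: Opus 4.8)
The plan is to reduce every statement to the computation of a single covariance, using the fact that each $w_i(t)$ has independent components each equal to $\sqrt{t}$ times a standard Gaussian, so the joint law of all the $w_i(t)$ and their linear images $w_i^\ell(t)$, $\breve w_i(t)$ is jointly Gaussian. For jointly Gaussian vectors, independence is equivalent to zero covariance, so each ``if and only if'' will follow once I identify the relevant cross-covariance matrix and read off when it vanishes. Concretely, writing $W(t) = w(t) \in \BR^{d_w\times n}$ with columns $w_i(t)$, and using $\mathds{E}[w_i(t) w_j(t)^\TRANS] = t\,\delta_{ij} I_{d_w}$, I have $w^\ell_i(t) = \sum_{k} v^\ell_k v^\ell_i\, w_k(t)$ and $\breve w_i(t) = w_i(t) - \sum_{\ell} \sum_k v^\ell_k v^\ell_i\, w_k(t) = \sum_k \bigl(\delta_{ik} - \sum_\ell v^\ell_i v^\ell_k\bigr) w_k(t)$. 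The vector $\pi_i := \sum_\ell v^\ell_i v^\ell \in \BR^n$ is exactly the $i$-th column of the projection $\Pi = \sum_\ell v^\ell v^\ell{}^\TRANS$ onto the range of $M$, so $w^\ell_i$, $\breve w_i$ are linear combinations of the $w_k$ with coefficient vectors $v^\ell_i v^\ell$ and $e_i - \pi_i$ respectively, where $e_i$ is the $i$-th standard basis vector.

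Next I would compute the six needed cross-covariances, each a scalar times $t\, I_{d_w}$, by taking inner products of the coefficient vectors and using orthonormality of $(v^1,\dots,v^L)$ — in particular $\langle v^\ell, v^h\rangle = \delta_{\ell h}$ and $\langle v^\ell, e_i - \pi_i\rangle = v^\ell_i - v^\ell_i = 0$. This last identity immediately gives statement (1): $\mathds{E}[\breve w_i(t) w^\ell_i(t)^\TRANS] = t\, v^\ell_i \langle e_i - \pi_i, v^\ell\rangle I_{d_w} = 0$, and being jointly Gaussian with zero mean and zero covariance they are independent — note this is unconditional, not just an ``iff''. For (2): $\mathds{E}[\breve w_j(t) w^\ell_i(t)^\TRANS] = t\, v^\ell_i \langle e_j - \pi_j, v^\ell\rangle I_{d_w} = t\, v^\ell_i (v^\ell_j - v^\ell_j) I_{d_w}$... wait — I must be more careful here, since $\breve w_j$ also has the $w_j$-component from $e_j$ and $\pi_j = \sum_h v^h_j v^h$; the inner product with $v^\ell_i v^\ell$ is $v^\ell_i(\langle e_j,v^\ell\rangle - \langle \pi_j, v^\ell\rangle) = v^\ell_i(v^\ell_j - v^\ell_j) = 0$ as well, so I should re-examine the stated condition $v_j^\ell(v_j^\ell - v_i^\ell)=0$; the correct coefficient-vector inner product to form is between $e_j - \pi_j$ and $v^\ell_i v^\ell$, and I will recompute it honestly — the point is that whatever scalar comes out, independence holds iff it is zero. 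Similarly (3) uses $\langle e_i - \pi_i, e_j - \pi_j\rangle = \delta_{ij} - \pi_{ij} - \pi_{ji} + (\Pi^\TRANS\Pi)_{ij} = \delta_{ij} - \Pi_{ij}$ (since $\Pi$ is a symmetric idempotent), whose off-diagonal part is $-\Pi_{ij} = -\sum_\ell v^\ell_i v^\ell_j$; (4) uses $\langle v^\ell_i v^\ell, v^\ell_j v^\ell\rangle = v^\ell_i v^\ell_j$; and (5) uses $\langle v^\ell_i v^\ell, v^h_j v^h\rangle = v^\ell_i v^h_j \delta_{\ell h} = 0$ when $\ell\neq h$.

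The only real subtlety — and the step I expect to need the most care — is matching my honest covariance computations to the exact algebraic conditions as stated in items (2), (3), (4), i.e.\ being precise about which vector is ``$e_i - \pi_i$'' versus ``$v^\ell_i v^\ell$'' and not dropping the $\pi$-terms; once the scalar covariance is correctly in hand, the ``iff'' is automatic from joint Gaussianity (zero covariance $\iff$ independence), and statement (1) and statement (5) are the unconditional special cases where the covariance vanishes identically by orthonormality. I would present the argument by first recording the joint-Gaussian observation and the coefficient-vector representation, then the orthonormality identities ($\langle e_i - \pi_i, v^\ell\rangle = 0$ and $\langle e_i - \pi_i, e_j - \pi_j\rangle = \delta_{ij} - \Pi_{ij}$), and finally dispatching the five items in one short paragraph each.
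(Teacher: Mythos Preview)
Your approach coincides with the paper's: the proof there merely observes that the variables are jointly Gaussian with zero mean, so independence is equivalent to zero cross-covariance, and asserts that ``by explicitly computing the covariance matrices'' each item is verified --- no formulas are displayed. You have simply carried that computation out in detail.

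Your unease about item~(2) is justified and your calculation is the correct one: because $\Pi v^\ell = v^\ell$, one has $\langle e_j - \pi_j, v^\ell\rangle = v^\ell_j - v^\ell_j = 0$, so $\mathds{E}\bigl[\breve w_j(t)\, w_i^\ell(t)^\TRANS\bigr] = t\, v_i^\ell\langle e_j-\pi_j, v^\ell\rangle I_{d_w} = 0$ for \emph{all} $i,j,\ell$, and $\breve w_j(t)$, $w_i^\ell(t)$ are always independent --- the printed condition $v_j^\ell(v_j^\ell - v_i^\ell)=0$ is not what the covariance actually produces. Likewise your formula $\langle e_i-\pi_i,\, e_j-\pi_j\rangle = \delta_{ij}-\Pi_{ij}$ is correct for item~(3), so for $i\neq j$ independence holds iff $\Pi_{ij}=\sum_{\ell} v_i^\ell v_j^\ell = 0$, which is not the same as the single-$\ell$ condition $v_i^\ell v_j^\ell=0$ stated in the lemma. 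These are imprecisions in the lemma's stated conditions, not flaws in your method; the paper's proof does not exhibit the computations, so trust your covariance calculations over the exact algebraic conditions printed in items~(2) and~(3). Items~(1), (4), and~(5) match your computations exactly.
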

\begin{proof}
	Since for any fixed time $t\in [0,T]$, $\breve w_i(t)$, $\breve w_j(t)$, $w_j^\ell(t)$ and $w_k^h(t)$ are Gaussian random variables with zero mean, they are independent if and only if the covariance matrix is zero.  By explicitly computing the covariance matrices, results in Lemma~\ref{lem:w-w-ell} are verified. 
\end{proof}

Since $w_i^\ell$ and $\breve w_i$ are linear combinations of independent standard Brownian motions, they themselves are Brownian motions. 
 It is easy to verify that for $s>0, t\geq0$,
 \begin{equation*}
 \begin{aligned}
 & \text{var}(w_i^\ell(t+s)-w_i^\ell(t)) = s (v^\ell_i)^2 I_{d_w},  \\
& \text{var}(\breve w_i(t+s)-\breve w_i(t))= s\Bigl(1-\sum_{\ell=1}^L (v_i^\ell)^2\Bigr) I_{d_w}.
\end{aligned} 
 \end{equation*}
Hence the intensities of  $w_i^\ell$ and $\breve w_i$ are  $|v_i^\ell|$ and $\big(1-\sum_{\ell=1}^L (v_i^\ell)^2\big)^{\frac12}$, respectively.
 
Recall the definition of $\breve x_i(t)$, $\breve u_i(t)$, $x^\ell(t)$ and $u^\ell(t)$. 
Following arguments similar to the deterministic case, we obtain the following stochastic differential equations for the decomposed dynamics
  \begin{align}
     dx^\ell_i(t)& = \Big[(A+\lambda^\ell D) x_i^\ell(t) + (B+\lambda^\ell E)u_i^\ell(t)\Big]dt + Fdw_i^\ell(t), \label{eq:stochastic-eigen-dyn}\\
         d {\breve x}_i (t) & = \big[A \breve x_i(t) + B \breve u_i(t)\big]dt +  F d\breve w_i(t), \label{eq:stochastic-aux-dyn}
   \end{align}
for all $i \in \mathcal{N}$, $\ell \in\{1,\ldots,L\}$. 
Following the proof argument of Proposition \ref{prop:cost-decomposition}, we obtain
\begin{equation}\label{eq:stochastic-cost-sum}
  J(u) = \sum_{i \in \CN} \Bigl[ \breve J_i(\breve u_i) 
    + \sum_{\ell=1}^L J^\ell_i(u^\ell_i) \Bigr],
\end{equation}
where for all $i\in \CN$ and  $\ell \in\{1,\ldots, L\}$,
      \begin{align}
         J^\ell_i(u^\ell_i)& = 
       \mathds{E}{\Big[}  \int_{0}^T \bigl(
          q^\ell x^\ell_i(t)^\TRANS Q x^\ell_i(t) + 
          r^\ell u^\ell_i(t)^\TRANS R u^\ell_i(t)
        \bigr) dt \nonumber  \\
        &\qquad \qquad + q^\ell x^\ell_i(T)^\TRANS Q_T x^\ell_i(T)\Big],\label{eq:stochastic-eigen-cost}\\
       \breve J_i(\breve u_i) &= \mathds{E}\Big[        \int_{0}^T \bigl(
          q_0 \breve x_i(t)^\TRANS Q \breve x_i(t) + 
          r_0 \breve u_i(t)^\TRANS R \breve u_i(t)
        \bigr) dt \nonumber
        \\ & \qquad \qquad + q_0 \breve x_i(T)^\TRANS Q_T \breve x_i(T)\Big ].\label{eq:stochastic-aux-cost}
      \end{align}

\subsection{Optimal control solution}

\begin{theorem} \label{thm:main-result-control-networks-stochastic}
  Under assumptions \textup{{}(A0)}, \textup{(A1)} and \textup{(A2)}, the optimal control strategy for
  Problem~\ref{prob:stochatic} is the same as the strategy in Theorem \ref{thm:main-result-control-networks}  given by \eqref{eq:opt-control}. 
  Furthermore, the optimal cost is given by 
\begin{align}\label{eq:total-opt-sd-cost}
    V(x(0))  & = \sum_{i \in \mathcal{N}}\Big[ \breve V_i(\breve x_i(0)) + \sum_{\ell=1}^L V_i^{\ell}(x_i^\ell(0))\Big],
\end{align}
where for $i\in\mathcal{N}$ and $\ell \in\{1,...,L\}$,  
\begin{align}
   \breve V_i(\breve x_i(0)) &= \breve x_i(0)^\TRANS \breve P(0)\breve x_i(0) \nonumber\\
  & \qquad + \bigl[1-\sum_{\ell=1}^L (v_i^\ell)^2\bigr]\int_0^T \Tr\big[\breve P(t) F  F^\TRANS \big]dt, \label{eq:aux-opt-sd-cost} \\
  V_i^{\ell}(x_i^\ell(0)) &=  x_i^\ell(0)\strut^\TRANS  P^\ell(0)x_i^\ell(0)\nonumber \\
  & \qquad  + (v_i^\ell)^2\int_0^T \Tr\big[ P^\ell(t)F  F^\TRANS\big]dt \label{eq:eig-opt-sd-cost}.
\end{align}

\end{theorem}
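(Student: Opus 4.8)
The plan is to reduce Problem~\ref{prob:stochatic} to a family of mutually independent finite-horizon stochastic LQR problems, exactly as in the deterministic case treated in Theorem~\ref{thm:main-result-control-networks}, and then invoke certainty equivalence. The decomposition is already in hand: by \eqref{eq:stochastic-eigen-dyn}--\eqref{eq:stochastic-aux-dyn} each eigenstate $x_i^\ell$ and each auxiliary state $\breve x_i$ is a controlled diffusion with decoupled dynamics driven respectively by $w_i^\ell$ and $\breve w_i$, and by \eqref{eq:stochastic-cost-sum}--\eqref{eq:stochastic-aux-cost} (which rests on (A0) through Proposition~\ref{prop:cost-decomposition}) the objective splits additively as $J(u)=\sum_{i\in\CN}\bigl[\breve J_i(\breve u_i)+\sum_{\ell=1}^L J_i^\ell(u_i^\ell)\bigr]$, the $(\ell,i)$-th and $i$-th summands depending only on $(x_i^\ell,u_i^\ell)$ and $(\breve x_i,\breve u_i)$ respectively. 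First I would note that the maps $u\mapsto u^\ell=u v^\ell (v^\ell)^\TRANS$ and $u\mapsto\breve u=u-\sum_\ell u^\ell$ are linear, preserve $\mathcal{F}(t)$-measurability, and are inverted by \eqref{eq:u-i}; hence choosing an $\mathcal{F}(t)$-adapted $u$ is the same as choosing $\mathcal{F}(t)$-adapted trajectories $\{u_i^\ell\}$, $\{\breve u_i\}$ freely, and minimizing $J$ amounts to minimizing each $J_i^\ell$ and each $\breve J_i$ on its own; assumptions (A1)--(A2) make each of these a bona fide LQR problem with well-defined Riccati solution.

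Next I would solve each subproblem by completion of squares. For eigensystem $(\ell,i)$, apply It\^o's formula to $t\mapsto x_i^\ell(t)^\TRANS P^\ell(t) x_i^\ell(t)$ along \eqref{eq:stochastic-eigen-dyn}, substitute the Riccati equation \eqref{eq:P-ell}, and take expectations: the stochastic integral is a zero-mean martingale, the It\^o correction contributes $(v_i^\ell)^2\int_0^T \Tr[P^\ell(t) F F^\TRANS]\,dt$ because $w_i^\ell$ has intensity $|v_i^\ell|$ (so $\mathrm{var}(dw_i^\ell(t))=(v_i^\ell)^2 I_{d_w}\,dt$), and the remaining terms of $J_i^\ell(u_i^\ell)$ collapse to $x_i^\ell(0)^\TRANS P^\ell(0) x_i^\ell(0)$ plus a nonnegative integral of $(u_i^\ell(t)+K^\ell(t)x_i^\ell(t))^\TRANS(r^\ell R)(u_i^\ell(t)+K^\ell(t)x_i^\ell(t))$ that vanishes precisely when $u_i^\ell(t)=-K^\ell(t)x_i^\ell(t)$. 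Hence $J_i^\ell$ is minimized, with value $V_i^\ell(x_i^\ell(0))$ of \eqref{eq:eig-opt-sd-cost}, by that feedback. The identical computation for $\breve x_i$ using \eqref{eq:P-breve} and the intensity $(1-\sum_{\ell=1}^L(v_i^\ell)^2)^{1/2}$ of $\breve w_i$ gives the minimizer $\breve u_i(t)=-\breve K(t)\breve x_i(t)$ and value $\breve V_i(\breve x_i(0))$ of \eqref{eq:aux-opt-sd-cost}. Recombining via \eqref{eq:u-i} reproduces the control law \eqref{eq:opt-control}, and summing the optimal values over $i\in\CN$ and $\ell\in\{1,\dots,L\}$ yields \eqref{eq:total-opt-sd-cost}.

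The step I expect to require the most care — and which I regard as the main obstacle — is the claim that optimizing the decomposed costs over the full filtration $\mathcal{F}(t)$, rather than over the smaller filtrations generated by the individual noises $w_i^\ell$, $\breve w_i$, does not enlarge the set of achievable costs. This is where the correlation structure catalogued in Lemma~\ref{lem:w-w-ell} is relevant, since a priori one subsystem's controller might try to exploit information about another subsystem's noise. The resolution is that $J(u)$ is a finite additive sum whose $k$-th term depends only on the $k$-th state--control pair, so $\min_u J(u)=\sum_k \min(\text{$k$-th term})$ irrespective of the joint law of the noises, and each $\min(\cdot)$ is a classical stochastic LQR whose optimum is attained by the Markov feedback above: an adapted control that ignores everything except its own current state can do no worse, by the completion-of-squares inequality, and no better, by feasibility. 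The cross-correlations affect only the joint distribution of the closed-loop states, not the value of any cost term or its minimizer, so they play no role in the final statement.
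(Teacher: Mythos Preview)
Your proposal is correct and follows essentially the same route as the paper: decompose dynamics and cost via Propositions~\ref{prop:state-decomposition}--\ref{prop:cost-decomposition}, observe that the total cost splits as \eqref{eq:stochastic-cost-sum}, solve each stochastic LQR subproblem, and recombine. The paper's proof is terser---it invokes certainty equivalence as a known principle (citing \cite{van1981certainty}) and the standard LQG cost formula (citing \cite{yong1999stochastic}) rather than carrying out the It\^o/completion-of-squares computation---whereas you make that computation explicit and, in your final paragraph, articulate more precisely why the cross-correlations in Lemma~\ref{lem:w-w-ell} do not spoil the separate minimizations (the paper only gestures at this via non-negativity of the summands). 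One minor imprecision: the components $\{u_i^\ell\},\{\breve u_i\}$ cannot be chosen \emph{freely}---they must lie in the ranges of the respective projections---but since the optimal Markov feedbacks you derive do lie in those ranges (as $x_i^\ell,\breve x_i$ do), the relaxed and constrained minima coincide, so your conclusion is unaffected.
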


  \begin{proof}
  
  The dynamics in \eqref{eq:stochastic-dynamics} can be decomposed into \eqref{eq:stochastic-eigen-dyn} and \eqref{eq:stochastic-aux-dyn}, and the decomposition of the cost in \eqref{eq:stochastic-cost} follows \eqref{eq:stochastic-cost-sum}, \eqref{eq:stochastic-eigen-cost} and \eqref{eq:stochastic-aux-cost}. 
   Therefore, Problem \ref{prob:stochatic} can be equivalently decomposed into the linear quadratic control problems defined by \eqref{eq:stochastic-eigen-dyn} and \eqref{eq:stochastic-eigen-cost}, and the linear quadratic control problems given by \eqref{eq:stochastic-aux-dyn} and \eqref{eq:stochastic-aux-cost}, where $i\in \mathcal{N}$.  
%
     Note that the Brownian motions are not necessarily independent across all the decoupled problems as illustrated in Lemma~\ref{lem:w-w-ell}.   
 However, following the certainty equivalence principle for {}{linear quadratic Gaussian} problems (see e.g., \cite{van1981certainty}),
  we obtain the same optimal control feedback gain as the deterministic case, which does not depend on {{}the intensity of} the Brownian motion.   This, together with the non-negativity of each term in  \eqref{eq:stochastic-cost-sum} under assumptions (A1) and (A2),  implies that solving the decomposed linear quadratic control problems independently yields the optimal feedback gain for  and hence optimal solution to Problem \ref{prob:stochatic}. Therefore, the optimal feedback gains are the same as those in Theorem \ref{thm:main-result-control-networks} for the linear quadratic control problems and the optimal control is given by \eqref{eq:opt-control}. {The optimal costs for the decomposed linear quadratic control problems are given by \eqref{eq:aux-opt-sd-cost} and \eqref{eq:eig-opt-sd-cost} (see for instance  \cite{yong1999stochastic}) and hence the optimal cost for Problem~\ref{prob:stochatic} is given by \eqref{eq:total-opt-sd-cost}. }
  \end{proof}
  
Note that  the intensity of the Brownian motion does not influence the optimal feedback gain  but it effects the optimal cost under optimal control. 

\begin{remark} The result of Theorem 3 generalizes to the infinite horizon long run average cost setup and the infinite horizon discounted cost setup in a natural manner. For each of these setups, the optimal control law will be of the same form as Theorem 3 but the control gains will be time homogeneous and determined by the solution of an algebraic Riccati equation. 
\end{remark}

\section{Illustrative Examples}
\subsection{Adjacency matrix coupling} \label{subsec:adjacency-couple}

Consider a network with $n=4$ subsystems connected over a graph $\mathcal{G}$,
as shown in Fig.~\ref{fig:example}, with its adjacency matrix as the coupling matrix $M$.
%
\begin{figure}[htb]
  \centering
\begin{tikzpicture}[thick,scale=0.9]
    \node [agent] at (0, 0) (3) {$3$};
    \node [agent] at (1, 0) (4) {$4$}; 
    \node [agent] at (0, 1) (2) {$2$};
    \node [agent] at (1, 1) (1) {$1$};

    \path (1) edge node[above] {$a$} (2)
          (2) edge node[left]  {$a$} (3)
          (3) edge node[below] {$b$} (4)
          (4) edge node[right] {$b$} (1);

    \matrix [matrix of math nodes, left delimiter={[}, right delimiter={]}]  at (3.5,0.5)
         {  0 & a & 0 & b \\
         a & 0 & a & 0 \\
         0 & a & 0 & b \\
         b & 0 & b & 0 \\} ;      
  \end{tikzpicture}
  \caption{Graph $\mathcal{G}$ with $n=4$ nodes and its adjacency matrix}
  \label{fig:example}
\end{figure}
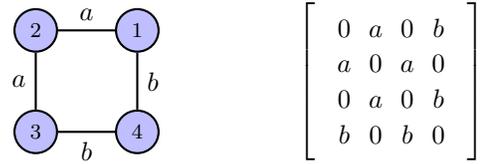
Note that $L =\text{rank}(M) =2$. Consider the following couplings in the cost
\begin{equation}\label{equ:G-H}
  G= I -2M +M^2 \quad \text{and} \quad H=I.
\end{equation}
For the ease of notation define $\rho =\sqrt{(a^2+b^2)/{2}}$ and $\theta = \tan^{-1}(b/a)$. Then it is easy to verify that the non-zero eigenvalues of $M$ are $\lambda^1 = -\rho$ and $\lambda^2 = \rho$. The corresponding eigenvectors are $v^1  =\SMATRIX{-\frac12  ~ \frac{\sin(\theta)}{\sqrt{2}} ~ -\frac12 ~\frac{\cos(\theta)}{\sqrt{2}}}^\TRANS$ and $ v^2  = \SMATRIX{\frac12 ~  \frac{\sin(\theta)}{\sqrt{2}} ~ \frac12  ~\frac{\cos(\theta)}{\sqrt{2}}}^\TRANS.$
Observe that $q^\ell=(1-\lambda^\ell)^2$ is non-negative and $r^\ell = 1$ is strictly positive, $\ell \in \{1,2\}$. Thus the model satisfies assumption~(A2). 

To illustrate how to use the result of
Theorem~\ref{thm:main-result-control-networks}, let's pick a subsystem, say
subsystem~$1$, and consider the calculations that need to be carried out at that
subsystem. 
Recall that for all $i \in \CN$, 
\(
  x_i^\ell(0) = x(0) v^\ell v^\ell_i.
\)
Thus 
\begin{align*}
  x^1_1(0) &= \frac 14 x_1(0) - \frac{\sin(\theta)}{2\sqrt{2}} x_2(0) 
  + \frac 14 x_3(0) - \frac{\cos(\theta)}{2\sqrt{2}} x_4(0),\\
  x^2_1(0) &= \frac 14 x_1(0) + \frac{\sin(\theta)}{2\sqrt{2}} x_2(0) 
  + \frac 14 x_3(0) + \frac{\cos(\theta)}{2\sqrt{2}} x_4(0). 
\end{align*}
Following the mixed implementation with information structure 3) described in Section IV-A, subsystem 1 can calculate the trajectory for $x^1_1(t)$, $x^2_1(t)$, $t\in(0,T]$ based on the initial
conditions. This together with real time local observation $x_1(t)$ yields
$\breve x_1(t)$.

Subsystem 1 solves three Riccati equations to compute $P^1(t)$, $P^2(t)$, and $\breve P(t)$ for $t\in[0,T]$, {{} where
\begin{equation} \label{eq:Riccati-Example1}
  \begin{aligned}
   - &\dot {\breve P}(t) = A^\TRANS \breve P(t) + \breve P(t) A
    - \breve P(t) B R^{-1} B^\TRANS \breve P(t) +  Q,\\
        - &\dot P^1(t) =
    (A -\rho D)^\TRANS P^1(t) + P^1(t) (A -\rho D)
    \\
     &- P^1(t) (B -\rho E)  R^{-1} (B -\rho E)^\TRANS P^1(t) +
      (1+\rho)^2 Q\\
              - &\dot P^2(t) =
    (A +\rho D)^\TRANS P^2(t) + P^2(t) (A +\rho D)
    \\ \vspace{0.3cm}
     &- P^2(t) (B +\rho E)  R^{-1} (B +\rho E)^\TRANS P^2(t) +
      (1-\rho)^2 Q\\
  \end{aligned}
  \end{equation}
with $\breve P(T) =  Q_T$, $P^1(T)= (1+\rho)^2Q_T$ and $P^2(T)=(1-\rho)^2Q_T$,}
 and then applies the optimal control action given by 
\begin{align*}
  u_1(t) = - R^{-1}\biggl(B^\TRANS \breve P(t) \breve x_1(t) &+  (B- \rho E)^\TRANS P^1(t) x_1^1(t) \\
  &+ (B+\rho E)^\TRANS P^2(t) x_1^2(t) 
   \biggr)
 \end{align*}
according to Theorem \ref{thm:main-result-control-networks}. 
 Similar implementations hold for other subsystems. 

Note that if each $x_i(t) \in \BR^{d_x}$ then $x(t) \in \BR^{4d_x}$. A naive centralized optimal solution of the above system would involve solving a $4d_x \times 4d_x$-dimensional Riccati equation. In contrast, the above solution involves solving three $d_x\times d_x$-dimensional Riccati equations. 

{{}Moreover,} these computational savings may increase with the size of the networks. For example, consider the graph $\mathcal{G}_{4c} = \mathcal{G} \otimes \mathcal{K}_c$ {{}with $4c$ nodes}, where $\mathcal{G}$ is the 4-node graph shown in Fig.~\ref{fig:example} and $\mathcal{K}_c$ is the complete graph with $c$ nodes and each edge weight is $\frac1c$ {{}where $c$ is a positive integer}. The adjacency matrix of $\mathcal{G}_{4c}$ is given by 
  $M_{4c} = M \otimes K_c,$
where $M$ and $K_c = \frac1c \mathds{1}_{c\times c}$ are the adjacency matrices of graph $\mathcal{G}$ and $\mathcal{K}_c$ respectively. 
The only non-zero eigenvalue of $K_c$ is 1. Thus, the eigenvalues of $M_{4c}$
are the same as eigenvalues of $M$. Note that the Riccati equations in Theorem
\ref{thm:main-result-control-networks} only depend on the eigenvalues. {{}So for all different graphs $\mathcal{G}_{4c}$ where $c$ can be any positive integer, the
Riccati equations are the same, given by \eqref{eq:Riccati-Example1}!}
The method proposed in Theorem \ref{thm:main-result-control-networks} would require solving the same three $d_x\times d_x$-dimensional Riccati equations in \eqref{eq:Riccati-Example1},  while a naive direct solution would require  solving a $4cd_x \times 4cd_x$-dimensional Riccati equation.

As an illustration, we consider the graph {{}$\mathcal{G}_{4c} = \mathcal{G} \otimes \mathcal{K}_c$ where $\mathcal{G}$ is given in Fig.~\ref{fig:example} with} weights $a = 2$ and $ b = 1$. Recall that $G$ and $H$ are given by \eqref{equ:G-H}.
As argued above, the matrix $M_{4c}$ has two non-zero eigenvalues and the optimal control at each subsystem can be obtained by solving {}{only} $3$ Riccati equations. Let us set $c=5$. Then $M_{20} = M \otimes \frac15 \mathds{1}_{5\times 5}$.  \\

\noindent \textbf{Example 1}: 
%
Consider Problem \ref{prob:main} with $20$ subsystems on $G_{20}$ where 
 $d_x=d_u =1$, the coupling matrix is the adjacency matrix $M_{20}$ of the graph $\mathcal{G}_{20}$, and the parameters are $A=2, B = 1, D = 3, E = 0.5$, $Q = 5, Q_T = 6, R= 2, T=2$. 
 The evolutions of eigenstates and auxiliary states along with the corresponding eigencontrols and auxiliary controls are shown in Fig.~\ref{fig:sim-example}.\\
 \begin{figure}[htb] 
\centering
  \includegraphics[width = 8.5cm,trim={2.5cm 0  2.5cm 0.8cm},clip]{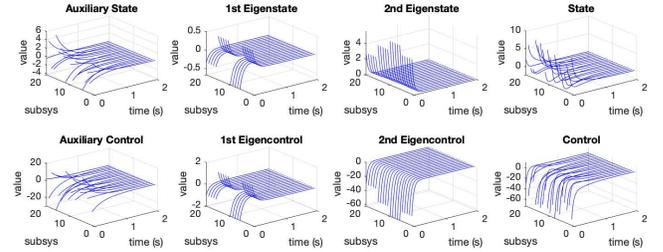}
  \caption{Numerical example under the proposed optimal control on a network of size 20 over $[0,T]$ with $T=2$.}\label{fig:sim-example}
\end{figure}

 \noindent {{}\textbf{Example 2}: 
 To illustrate the case with higher dimensional local states, we consider a network of coupled harmonic oscillators where for subsystem $i \in \mathcal{N}$, the state is given by $x_i= [\theta_i, \omega_i]^\TRANS$ representing the angle and angular velocity, and the control $u_i$ represents the input force.  
 Thus  $d_x=2$ and $d_u=1$. 
 %
Consider Problem \ref{prob:main} with $20$ coupled harmonic oscillators on $G_{20}$  where the coupling matrix is the adjacency matrix $M_{20}$ of the graph $G_{20}$ and the parameters are 
\begin{equation}\label{eq:2D-dynamics parameters}
\begin{aligned}
&A=\left[\begin{array}{cc} 0 & 10\\ -20 & 0 \end{array}\right], ~B=\left[\begin{array}{c} 0\\ 1.5 \end{array}\right],~ D=\left[\begin{array}{cc} 1 & 0\\ 0 & 1 \end{array}\right], ~R=1,
\\ 
&E=\left[\begin{array}{c} 1\\ 1 \end{array}\right], ~Q=\left[\begin{array}{cc} 6 & 0\\ 0 & 6 \end{array}\right], ~Q_T=\left[\begin{array}{cc} 5 & 0\\ 0 & 5 \end{array}\right],  ~T=2.
\end{aligned}
\end{equation}
The result is illustrated in Fig.~\ref{fig:sim-example-Adj2D}. 
\begin{figure}[htb] 
\centering
  \includegraphics[width = 8.5cm,trim={2.5cm 0  2.5cm 0.8cm},clip]{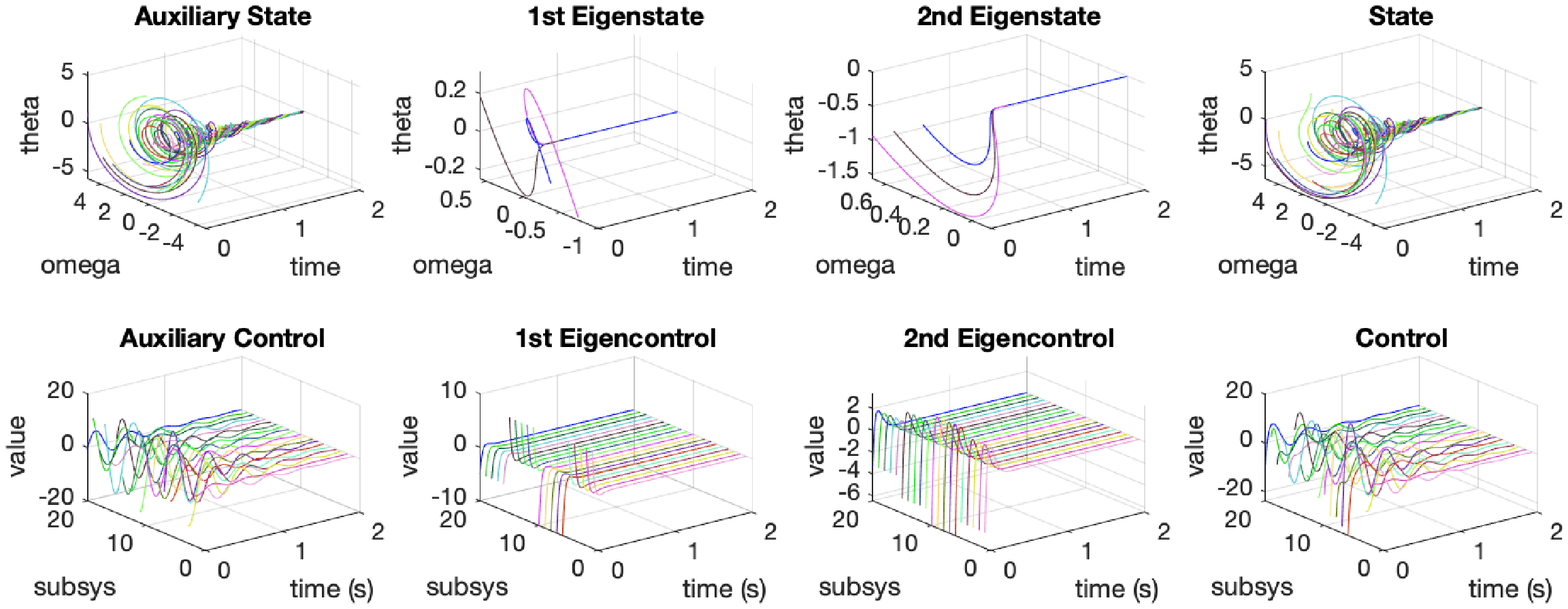}
  \caption{Numerical example under the proposed optimal control on a network of $20$ coupled harmonic oscillators  with $T=2$.}
  \label{fig:sim-example-Adj2D}
\end{figure}}
\subsection{Adjacency matrix coupling for stochastic systems}
We consider the same model as in the previous section, but assume that the
system dynamics are stochastic. In particular, we consider the graph
$\mathcal{G}_{20}$ in 
{{}Section~\ref{subsec:adjacency-couple}}.
\\

\noindent \textbf{Example 3}: 
Consider the stochastic generalization of Example 1 with
$F=1$. All other parameters are the same as those in {Example 1}. The simulation result is given in Fig.~\ref{fig:sim-example-SDE}.\\

\begin{figure}[htb] 
\centering
  \includegraphics[width = 8.5cm,trim={2.5cm 0  2.5cm 0.8cm},clip]{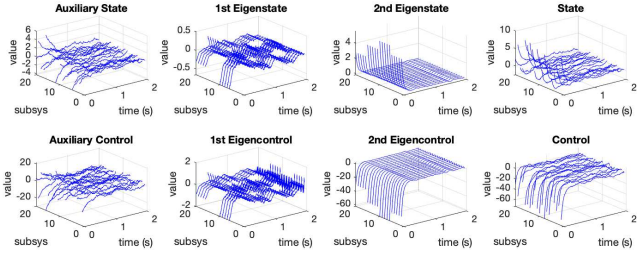}
  \caption{Numerical example with additive noise under the proposed optimal control on a network of size 20 over $[0,T]$ with $T=2$.}\label{fig:sim-example-SDE}
\end{figure}
{{}
\noindent \textbf{Example 4}: Consider the stochastic generalization of Example 2 with  
$F=\MATRIX{1&0\\0&1}$. All other parameters are the same as those in {Example 2}. The simulation result is given in Fig.~\ref{fig:sim-example-SDEAdj2D}.

\begin{figure}[htb] 
\centering
  \includegraphics[width = 8.5cm,trim={2.5cm 0  2.5cm 0.8cm},clip]{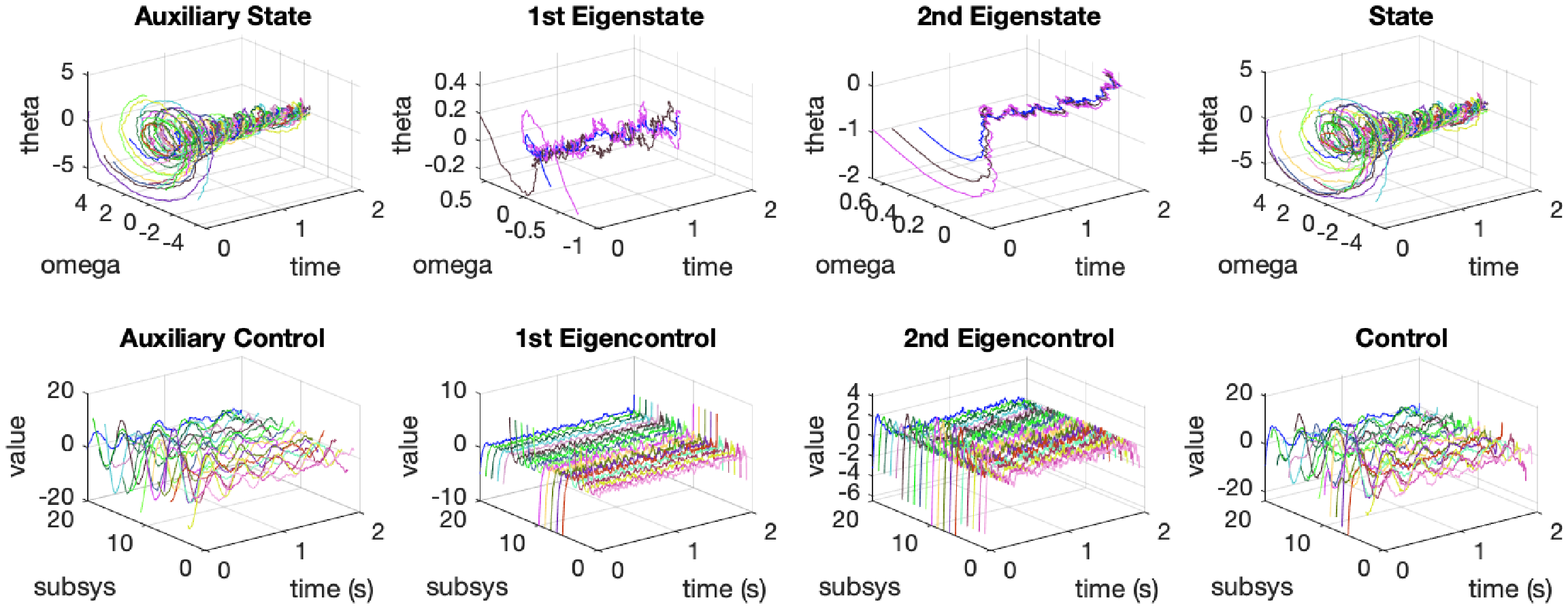}
  \caption{Numerical example with additive noise under the proposed optimal control on a network of 20 coupled harmonics oscillators with time horizon $T=2$.}\label{fig:sim-example-SDEAdj2D}
\end{figure}
}
\subsection{Laplacian matrix coupling}
We now consider examples where the coupling matrix $M$ is the Laplacian matrix of the underlying graph $\mathcal{G}_{20}$ in 
{{}Section~\ref{subsec:adjacency-couple}}.

\noindent \textbf{Example 5:} 
Consider Problem~\ref{prob:main} with $20$ subsystems on $\mathcal{G}_{20}$ where $d_x=d_u=1$, the coupling matrix is the Laplacian matrix $\mathcal{L}\triangleq\DIAG(M_{20}\mathds{1}_{20})- M_{20}$ of the graph $\mathcal{G}_{20}$,  and the parameters are
$A = 0.1$, $B = 1$, $D =E =0$, $R=0.1$, $Q=1$, $Q_T =0$, $T=2$,  $G=
\mathcal{L}^2$ and $H=I$.
The graph $\mathcal{G}_{20}$ is connected and hence the rank of $\mathcal{L}$
is 19.  However,  there are only $5$ distinct
non-zero eigenvalues. Therefore, the solution following
Theorem~\ref{thm:main-result-control-networks} requires solving $5+1$
decoupled scalar Riccati equations (see Remark~\ref{rem:distinct}): for $\ell
\in \{1, \dots, 5\}$
\begin{align}
        -\dot P^\ell(t) &=
    0.2 P^\ell(t) - 10 P^\ell(t)^2 +  (\lambda^\ell_\textup{dist})^2, ~P^\ell(T)= 0,\nonumber
    \intertext{where $\{\lambda^\ell_\textup{dist}\}_{\ell=1}^{5}$ are the
    distinct non-zero eigenvalues of $\mathcal{L}$ and } 
        -\dot {\breve P}(t) &=
    0.2 \breve P(t) - 10 \breve P (t)^2 +  0, \quad \breve P(T)=
    0\label{eq:consensus-aux-ex}.
  \end{align}
Note that, the solution to the auxiliary Riccati equation \eqref{eq:consensus-aux-ex} is $\breve P(t)=0$ for all $t \in [0,T]$, {{}which implies the control signal in the auxiliary direction should alway be zero (see the auxiliary control in Fig. \ref{fig:sim-example-SDE-Laplacian}).}  
In contrast to the above, a direct centralized solution requires
solving a $20\times 20$ dimensional matrix Riccati equation. 
The simulation
result is given in Fig.~\ref{fig:sim-example-SDE-Laplacian}.\\

 \begin{figure}[htb] 
\centering
  \includegraphics[width = 8.5cm,trim={2.5cm 0  2.5cm 0.8cm},clip]{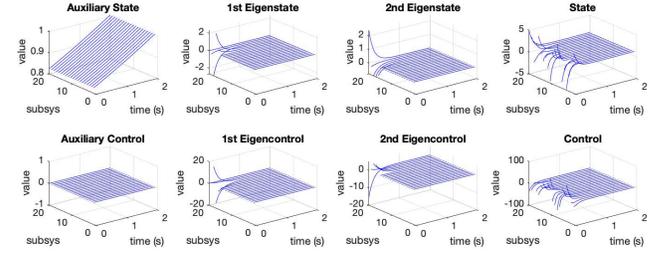}
  \caption{Numerical example with Laplacian matrix coupling under the proposed optimal control over $[0,T]$ with $T=2$.}\label{fig:sim-example-SDE-Laplacian}
\end{figure}

\noindent {{}
\textbf{Example 6:}
Consider Problem~\ref{prob:main} with 20 coupled harmonic oscillators on $\mathcal{G}_{20}$ where  $d_x=2$, $d_u=1$,  the coupling matrix is the Laplacian matrix $\mathcal{L}\triangleq\DIAG(M_{20}\mathds{1}_{20})- M_{20}$ of the graph $\mathcal{G}_{20}$,  and the parameters are given by $G= 
\mathcal{L}^2$, $H=I$ and \eqref{eq:2D-dynamics parameters}. The result is illustrated in Fig.~\ref{fig:sim-example-SDE-Lap2D}.  
\begin{figure}[htb] 
\centering
  \includegraphics[width = 8.5cm,trim={2.5cm 0  2.5cm 0.8cm},clip]{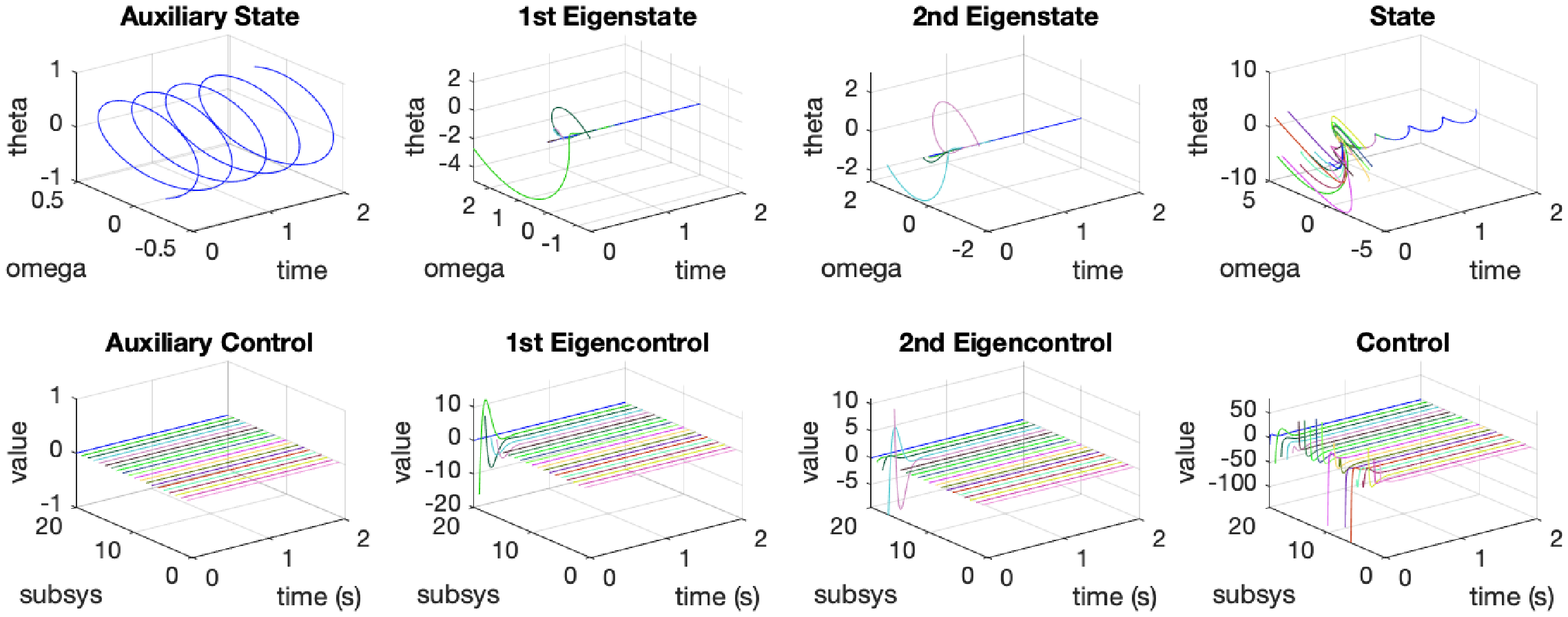}
  \caption{Numerical example with Laplacian matrix coupling under the proposed optimal control on  a network of 20 coupled harmonics oscillators over the time horizon $T=2$.}\label{fig:sim-example-SDE-Lap2D}
\end{figure}
}

\subsection{A special case: mean-field coupling} 
Suppose the graph $\mathcal{G}$ is a complete graph with all edge weights equal to $\frac{1}{n}$. Let $M$ be its adjacency matrix. Then $M = \frac1n\mathds{1}_{n\times n}$ 
has rank 1 and $\lambda_1=1$ is the only non-zero eigenvalue with the normalized eigenvector $v^1 = \frac{1}{\sqrt{n}}[1,\dots,1]^\TRANS$. Then 
$
  x^1(t)=x(t)v^1v^1\strut^\TRANS = x(t)M . 
$
Thus, the eigenstate
$  x_i^1(t) = \frac1n\sum_{j=1}^n x_j(t), i\in \CN,$
is the same for all subsystems and we denote it by $\bar x(t)$. Moreover, 
$
  q^1 = \sum_{k=0}^{K_G} q_k:= \bar{q}
$ and 
$    
  r^1 = \sum_{k=0}^{K_H} r_k:=\bar{r}.
$  
According to Theorem \ref{thm:main-result-control-networks}, the Riccati equation of eigensystem is given by 
  \begin{multline}
    -\dot{\bar P}(t) =
    (A +  D)^\TRANS \bar P(t) + \bar P(t) (A + D)
    \\
    \quad - \bar P(t) (B + E) (\bar r R)^{-1} (B + E)^\TRANS \bar P(t) +
      \bar q Q,
  \end{multline}
where $\bar P(t) := P^1(t)$ and the final condition $\bar P(T)=\bar q Q_T$.   
The Riccati equation for the auxiliary system is given by 
\begin{equation*} 
    - \dot {\breve P}(t) = A^\TRANS \breve P(t) + \breve P(t) A
    - \breve P(t) B (r_0 R)^{-1} B^\TRANS \breve P(t) + q_0 Q
  \end{equation*}
 with the final condition $\breve P(T)= q_0 Q_T$.
The optimal control strategy is given by
$u_i(t) = - \breve K(t) (x_i(t)- \bar x(t)) - \bar K(t) \bar x(t),$
where
  $\breve K(t) = (r_0 R)^{-1} B^\TRANS \breve P(t)
    \text{ and }
    \bar K(t) = (\bar r R)^{-1} (B + E)^\TRANS \bar P(t).$

The above result is similar in spirit to~\cite[Theorem 1 and Theorem
4]{arabneydi2016team}, which were derived for discrete time systems.

\section{Conclusion}

We consider the optimal control of {{}network}-coupled subsystems where the
dynamics and the cost couplings depend on an underlying undirected weighted graph. The
main idea of a low-dimensional decomposition is to project the state $x(t)$
into $L$ orthogonal eigendirections {{}where $L$ denotes the rank of the coupling matrix. This projection} generates $\{ x^\ell(t) \}_{\ell=1}^L$ and an
auxiliary state $\breve x(t) = x(t) - \sum_{\ell=1}^L x^\ell(t)$. A similar
decomposition is obtained for the control inputs. These ${L+1}$ components are
decoupled both in dynamics and cost. Therefore, the optimal control input for
each component can be obtained by solving decoupled Riccati equations. 

The proposed approach requires solving at most ${L+1}$ Riccati equations, each of
dimension $d_x \times d_x$.  
{{}If furthermore the coupling matrix has repeated non-zero eigenvalues, then the proposed approach only requires solving $L_\textup{dist}+1$ decoupled Riccati equations where $L_\textup{dist}$ (with $L_\textup{dist}\leq L \leq n$) denotes the number of distinct non-zero eigenvalues of the coupling matrix.}
In contrast, a direct centralized solution requires
solving an $n d_x \times n d_x$\dash{}dimensional Riccati equation. Thus, even
when $L_\textup{dist} = n$, the proposed approach leads to considerable computational
savings. These savings improve significantly when $L_\textup{dist} \ll n$, as is the case
for adjacency matrices for many real-world networks.

{{}Future directions of this work include: 1) similar problems where different subsystems may have different parameters in the dynamics,  2) error bounds for approximate solutions when the coupling matrix only admits an approximate low-rank representation, 3) problems with non-linear local dynamics.}
\bibliographystyle{IEEEtran}
\bibliography{IEEEabrv,mybib} 

\appendix
\section{Proof for Proposition \ref{prop:cost-decomposition}} \label{app:proof-cost-decomposition}
\subsection{Preliminary properties of the state decomposition}

\begin{lemma}
  Let $k$ be a positive integer $k$ and $\ell, \ell' \in \{1, \dots, L \}$.
  Then, we have the following:
  \begin{enumerate}
    \item[\textbf{(P1)}] $x^\ell(t) M = \lambda^\ell x^\ell(t)$ and 
      $u^\ell(t) M = 
      \lambda^\ell u^\ell(t)$.

      \vskip 0.25\baselineskip 

    \item[\textbf{(P2)}] $x^\ell(t) M^k = (\lambda^\ell)^k x^\ell(t)$ and
      $u^\ell(t) M^k = (\lambda^\ell)^k u^\ell(t)$.

      \vskip 0.25\baselineskip 

    \item[\textbf{(P3)}] $x^\ell(t) G = q^\ell x^\ell(t)$ and
      $u^\ell(t) H = r^\ell u^\ell(t)$.
      
      \vskip 0.25\baselineskip 

    \item[\textbf{(P4)}] $\breve x(t) M = 0$ and $\breve u(t) M = 0$. 

      \vskip 0.25\baselineskip 
      
    \item[\textbf{(P5)}] $\breve x(t) M^k = 0$ and $\breve u(t) M^k = 0$. 

      \vskip 0.25\baselineskip 

    \item[\textbf{(P6)}] $\breve x(t) G = q_0 \breve x(t)$ and
      $\breve u(t) H = r_0 \breve u(t)$. 

      \vskip 0.25\baselineskip 

    \item[\textbf{(P7)}] $x(t) G = q_0 \breve x(t) + \sum_{\ell=1}^L 
      q^\ell x^\ell(t)$ and 
       $u(t) G = r_0 \breve u(t) + \sum_{\ell=1}^L 
       r^\ell u^\ell(t)$.
      
      \vskip 0.25\baselineskip 

    \item[\textbf{(P8)}] $\sum_{i \in \CN} x^\ell_i(t)^\TRANS Q
      x^{\ell'}_i(t) = \delta_{\ell \ell'} \sum_{i \in \CN} x^\ell_i(t)^\TRANS
      Q x^{\ell'}_i(t)$, 
      \\[2pt] where $\delta_{\ell \ell'}$ is the Kronecker delta
      function.

      \vskip 0.25\baselineskip 

    \item[\textbf{(P9)}] 
      $\sum_{i \in \CN} x_i(t)^\TRANS Q x^{\ell}_i(t) = 
      \sum_{i \in \CN} x^\ell_i(t)^\TRANS Q x^{\ell}_i(t)$ and \\[5pt]
      $\sum_{i \in \CN} u_i(t)^\TRANS R u^{\ell}_i(t) = 
       \sum_{i \in \CN} u^\ell_i(t)^\TRANS R u^{\ell}_i(t)$
  \end{enumerate}
\end{lemma}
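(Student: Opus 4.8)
The plan is to prove the nine properties in the order listed, since each one builds on the previous ones, and to rely on just three elementary facts: the orthonormality $(v^\ell)^\TRANS v^{\ell'} = \delta_{\ell\ell'}$ of the eigenvectors; the eigen-relation $M v^\ell = \lambda^\ell v^\ell$, equivalently $(v^\ell)^\TRANS M = \lambda^\ell (v^\ell)^\TRANS$ by symmetry of $M$; and the column-wise description $x^\ell_i(t) = v^\ell_i\, z^\ell(t)$ with $z^\ell(t) \DEFINED x(t) v^\ell \in \BR^{d_x}$ (and likewise $u^\ell_i(t) = v^\ell_i\, w^\ell(t)$ with $w^\ell(t) \DEFINED u(t) v^\ell$), which follows directly from $x^\ell(t) = x(t) v^\ell (v^\ell)^\TRANS$. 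Introducing this shorthand $z^\ell(t)$ up front is what keeps the later steps transparent.

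For (P1) I would right-multiply $x^\ell(t) = x(t) v^\ell (v^\ell)^\TRANS$ by $M$ and use $(v^\ell)^\TRANS M = \lambda^\ell (v^\ell)^\TRANS$; (P2) is then immediate by induction on $k$, and (P3) follows by writing $G = \sum_k q_k M^k$, applying (P2) term by term, and recognising the definition of $q^\ell$ (analogously $r^\ell$ for $H$). For (P4) I would invoke the spectral factorisation to get $x(t) M = \sum_\ell \lambda^\ell x^\ell(t)$, which is exactly \eqref{eq:eigen-z}, so that $\breve x(t) M = x(t)M - \sum_\ell x^\ell(t) M = \sum_\ell \lambda^\ell x^\ell(t) - \sum_\ell \lambda^\ell x^\ell(t) = 0$ using (P1); (P5) then follows by induction, and (P6)–(P7) by the same polynomial-expansion argument as (P3), now splitting off the $k=0$ term and combining with Proposition~\ref{prop:state-decomposition}.

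The only properties needing slightly more care are (P8) and (P9), which concern the cost cross-terms. For (P8) I would substitute $x^\ell_i(t) = v^\ell_i z^\ell(t)$ to obtain $\sum_{i} x^\ell_i(t)^\TRANS Q x^{\ell'}_i(t) = \bigl(\sum_i v^\ell_i v^{\ell'}_i\bigr) z^\ell(t)^\TRANS Q z^{\ell'}(t) = \delta_{\ell\ell'}\, z^\ell(t)^\TRANS Q z^{\ell'}(t)$ by orthonormality, which is the claimed identity; equivalently one may write the sum as $\Tr\bigl(x^\ell(t)^\TRANS Q x^{\ell'}(t)\bigr)$ and use the cyclic property of the trace together with $(v^\ell)^\TRANS v^{\ell'}=\delta_{\ell\ell'}$. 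For (P9) I would expand $x_i(t) = \breve x_i(t) + \sum_{\ell'} x^{\ell'}_i(t)$ inside $\sum_i x_i(t)^\TRANS Q x^\ell_i(t)$: the $\breve x$ term vanishes because $\sum_i v^\ell_i \breve x_i(t) = \breve x(t) v^\ell = x(t)\bigl(v^\ell - \sum_{\ell'} v^{\ell'}(v^{\ell'})^\TRANS v^\ell\bigr) = 0$ by orthonormality, and the remaining sum $\sum_{\ell'}\sum_i x^{\ell'}_i(t)^\TRANS Q x^\ell_i(t)$ collapses to $\sum_i x^\ell_i(t)^\TRANS Q x^\ell_i(t)$ by (P8); the control identity is identical with $(R,u,w)$ in place of $(Q,x,z)$.

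I do not expect a genuine obstacle: the whole lemma is bookkeeping. If anything, the one point to watch is keeping the atypical matrix conventions straight — $x(t), x^\ell(t), \breve x(t) \in \BR^{d_x\times n}$ act on the right by $n\times n$ matrices — and making sure the column/trace manipulations in (P8)–(P9) invoke orthonormality in exactly the right place.
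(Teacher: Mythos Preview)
Your proposal is correct and follows essentially the same route as the paper: (P1)--(P7) are argued identically, and (P8) is the same computation with the convenient shorthand $z^\ell(t)=x(t)v^\ell$. The only minor difference is in (P9): the paper computes $\sum_i x_i(t)^\TRANS Q x^\ell_i(t) = (v^\ell)^\TRANS x(t)^\TRANS Q x(t) v^\ell$ directly and then matches it to the (P8) expression, whereas you decompose $x_i(t)=\breve x_i(t)+\sum_{\ell'} x^{\ell'}_i(t)$, kill the $\breve x$ cross-term via $\breve x(t)v^\ell=0$, and invoke (P8); both arguments are equally short and valid.
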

\begin{proof} 
  We show the result for $\breve x(t)$. The result for $\breve u(t)$ follows
  from a similar argument. 

  Since $v^1, \dots, v^L$ are orthonormal, from~\eqref{eq:spectrum} we have
  \(
    v^\ell v^\ell\strut^\TRANS M
    = \lambda^\ell v^\ell v^\ell\strut^\TRANS
  \), which implies (P1). (P2) follows immediately from (P1) and (P3) follows
  from (P2).

  (P4) follows immediately from the definition of $\breve x(t)$,
  \eqref{eq:eigen-z} and~(P1). (P5) follows immediately from (P4) and (P6)
  follows from (P5).

  (P7) follows from~\eqref{eq:x-i}, (P3) and (P6). To prove (P8), we observe
  that~\eqref{eq:eigen-x} implies that
  \begin{align}
    \sum_{i \in \CN} x^\ell_i(t)^\TRANS Q x^{\ell'}_i(t) &=
    \sum_{i \in \CN} v^\ell_i v^\ell\strut^\TRANS x(t)^\TRANS Q
    x(t) v^{\ell'} v^{\ell'}_i\strut^\TRANS
    \notag \\
    =& \Bigl( \sum_{i \in \CN} v^\ell_i v^{\ell'}_i \Bigr) 
    v^\ell\strut^\TRANS x(t)^\TRANS Q x(t) v^{\ell'}.
    \label{eq:x-x}
  \end{align}
  Since $v^1, \dots, v^L$ is orthonormal, we get
    $\sum_{i \in \CN} v^\ell_i v^{\ell'}_i 
        = v^\ell\strut^\TRANS v^{\ell'} = \delta_{\ell \ell'}.$
  Substituting this in~\eqref{eq:x-x} completes the proof of (P8). To prove
  (P9) observe that
  \begin{align}
    \sum_{i \in \CN} x_i(t)^\TRANS Q x^\ell_i(t) &= 
    \sum_{i \in \CN} x_i(t)^\TRANS Q x(t) v^\ell v^\ell_i
    \notag \\
    &= 
    \sum_{i \in \CN} v^\ell_i x_i(t)^\TRANS Q x(t) v^\ell 
    \notag \\
    &= v^\ell\strut^\TRANS x(t)^\TRANS Q x(t) v^\ell.
    \label{eq:x-x-ell}
  \end{align}
  From~\eqref{eq:x-x}, we get that the expression in~\eqref{eq:x-x-ell} is
  equal to $\sum_{i \in \CN} x^\ell_i(t)^\TRANS Q x^\ell_i(t)$.
\end{proof}

\begin{lemma} \label{lem:quadratic}
  Let $P$, $x$, and $y$ be defined in \eqref{def:short-hand-cost}. Let $P_i$
  denote the $i$-th column of $P$. Then, we can write
  \[
    \langle x, y \rangle_P = \sum_{i \in \CN} x_i^\TRANS y P_i
    \quad\text{or}\quad
    \langle x, y \rangle_P = \sum_{j \in \CN}  P_j^\TRANS x^\TRANS y_j.
  \]
\end{lemma}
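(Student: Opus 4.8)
The plan is to verify both identities by a direct manipulation of the double sum in \eqref{def:short-hand-cost}: perform one of the two summations first and recognize the result as a matrix--vector product, using that $x=\COLS(x_1,\dots,x_n)$ and $y=\COLS(y_1,\dots,y_n)$ are $d\times n$ matrices whose columns are the $x_i$ and $y_i$, and that $P_i$ is the $i$-th column of the $n\times n$ matrix $P$.

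For the second identity I would fix $j$ and sum over $i$ first. Since $p_{ij}$ is the $i$-th entry of the $j$-th column $P_j\in\BR^n$, associativity of matrix multiplication gives $\sum_{i\in\CN} p_{ij} x_i = xP_j$, so $\sum_{i\in\CN} p_{ij}\, x_i^\TRANS y_j = (xP_j)^\TRANS y_j = P_j^\TRANS x^\TRANS y_j$; summing over $j$ yields $\langle x,y\rangle_P=\sum_{j\in\CN}P_j^\TRANS x^\TRANS y_j$. For the first identity I would instead fix $i$ and sum over $j$: $\sum_{j\in\CN}p_{ij}y_j = y\hat P_i$, where $\hat P_i$ denotes the $i$-th row of $P$ written as a column vector. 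In the setting where this lemma is used, namely Proposition~\ref{prop:cost-decomposition} with $P\in\{G,H\}$ a polynomial in the symmetric matrix $M$ and hence symmetric, the $i$-th row of $P$ coincides with the $i$-th column $P_i$, so $\sum_{j\in\CN}p_{ij}y_j = yP_i$ and therefore $\langle x,y\rangle_P=\sum_{i\in\CN}x_i^\TRANS yP_i$. Equivalently, the first identity follows from the second by relabeling $i\leftrightarrow j$ and invoking $P=P^\TRANS$.

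There is no genuine obstacle here; the only care required is the column/row bookkeeping for the matrices $x$, $y$, $P$ (and, for the first form, the symmetry of $P$, which holds automatically for the weight matrices $G$ and $H$ appearing in the cost). Once those conventions are fixed, both claims are immediate from associativity of matrix multiplication.
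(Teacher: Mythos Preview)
Your argument is correct and follows the same direct unwinding of the double sum that the paper has in mind; the paper's own proof is the single line ``the result follows immediately from the definition of $\langle x,y\rangle_P$,'' so you have simply filled in the bookkeeping. Your observation that the first identity, with $P_i$ read as the $i$-th \emph{column}, actually requires $P=P^\TRANS$ is a valid subtlety the paper glosses over, and your justification via the symmetry of $G$ and $H$ is exactly the right fix in the context where the lemma is applied.
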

\begin{proof}
  The result follows immediately from the definition of $\langle x, y
  \rangle_P$.
\end{proof}

\subsection{Proof for Proposition \ref{prop:cost-decomposition}} 
  We consider the terms depending on $x(t)$. The term depending on $u(t)$ may
  be simplified in a similar manner. 

  From~\eqref{eq:x-i} and linearity of $\langle \cdot, \cdot \rangle_G$ in
  both arguments, we get
  \begin{align}
    \langle x(t), Q x(t) \rangle_G &= 
    \Bigl\langle \breve x(t) + \sum_{\ell=1}^L x^\ell(t), 
    Q \Bigl( \breve x(t) + \sum_{\ell=1}^L x^\ell(t) \Bigr) \Bigr\rangle_G
    \notag \\
    &= \langle \breve x(t), Q \breve x(t) \rangle_G + 
    2 \Bigl\langle \sum_{\ell = 1}^L x^\ell(t), Q \breve x(t) \Bigr\rangle_G
    \notag \\
    & \quad + 
    \Bigl\langle \sum_{\ell = 1}^L x^\ell(t), Q 
    \Bigl(       \sum_{\ell = 1}^L x^\ell(t) \Bigr) \Bigr\rangle_G .
    \label{eq:x-square}
  \end{align}
  From Lemma~\ref{lem:quadratic} and (P6), the first term
  of~\eqref{eq:x-square} simplifies to
  \begin{equation} \label{eq:x-square-1}
    \langle \breve x(t), Q \breve x(t) \rangle_G
    = q_0 \sum_{i \in \CN} \breve x_i(t)^\TRANS Q \breve x_i(t),
  \end{equation}
  and the second term simplifies to
  \begin{align}
    \hskip 2em & \hskip -2em
     \Bigl\langle \sum_{\ell = 1}^L x^\ell(t), Q \breve x(t) \Bigr\rangle_G
     = q_0 \sum_{i \in \CN}  \sum_{\ell = 1}^L x^\ell_i(t)^\TRANS Q \breve x_i(t)
     \notag \\
     &= q_0 \sum_{\ell=1}^L \sum_{i \in \CN} x^\ell_i(t)^\TRANS Q 
     \Bigl( x_i(t) - \sum_{\ell' = 1}^L x^{\ell'}_i(t) \Bigr)
     \notag \\
     &\stackrel{(a)}= 
     q_0 \sum_{\ell=1}^L \sum_{i \in \CN}
     \Bigl(
       x^\ell_i(t)^\TRANS Q x^\ell_i(t) 
       -
       x^\ell_i(t)^\TRANS Q x^\ell_i(t) 
      \Bigr)
      \notag \\
      &= 0, 
  \end{align}
  where $(a)$ follows from (P8) and (P9).
  From Lemma~\ref{lem:quadratic} and (P3), the third term
  of~\eqref{eq:x-square} simplifies to
  \begin{align}
    \hskip 2em & \hskip -2em
    \Bigl\langle \sum_{\ell = 1}^L x^\ell(t), Q 
    \Bigl(       \sum_{\ell = 1}^L x^\ell(t) \Bigr) \Bigr\rangle_G
    \notag \\
    &=
    \sum_{i \in \CN} \sum_{\ell=1}^L x^\ell_i(t)^\TRANS Q 
    \Bigl( \sum_{\ell'=1}^L q^{\ell'} x^{\ell'}_i(t) \Bigr)
    \notag \\
    &= \sum_{\ell = 1}^L \sum_{i \in \CN} x^\ell_i(t)^\TRANS Q 
    \Bigl( \sum_{\ell'=1}^L q^{\ell'} x^{\ell'}_i(t) \Bigr)
    \notag \\
    &\stackrel{(b)}=
    \sum_{\ell = 1}^L \sum_{i \in \CN} 
    q^\ell x^\ell_i(t) ^\TRANS Q x^\ell_i(t),
    \label{eq:x-square-3}
  \end{align}
  where $(b)$ follows from (P8). We get the result by
  substituting~\eqref{eq:x-square-1}--\eqref{eq:x-square-3}
  in~\eqref{eq:x-square}.

\end{document}